\newtheorem{definition}{Definition}[section]
\newtheorem{lemma}{Lemma}[section]
\newtheorem{theorem}{Theorem}[section]
\newtheorem{corollary}{Corollary}[section]
\newtheorem{proposition}{Proposition}[section]
\newtheorem{example}{Example}[section]
\newcommand{\median}{\mathrm{med}}
\title{Strategyproof Facility Location Mechanisms with Richer Action Spaces}
\author{
Xiang Yan$^{1,2}$ and Yiling Chen$^2$ \\
{$^1$Shanghai Jiao Tong University}\\
{$^2$Harvard University}\\
xiangyan@g.harvard.edu, yiling@seas.harvard.edu
}
\date{}
\begin{document}

\maketitle

\begin{abstract}

We study facility location problems where agents control multiple locations and when reporting their locations can choose to hide some locations (hiding), report some locations more than once (replication) and lie about their locations (manipulation). We fully characterize all facility location mechanisms that are anonymous, efficient, and strategyproof with respect to the richer strategic behavior for this setting. We also provide a characterization with respect to manipulation only. 
This is the first, to the best of our knowledge, characterization result for the strategyproof facility location mechanisms where each agent controls multiple locations.
\end{abstract}

\section{Introduction}

In a classic facility location problem, a social planner chooses to build a facility based on reported locations of agents on a real line. Each agent has one private location and prefers the facility to be built as close to her location as possible. Agents may choose to lie about their locations to influence where the facility is built. It is well-known that choosing the median of reported locations is not only  strategyproof but also socially optimal, resulting in the smallest total distance between the facility and agents' locations. \cite{moulin1980strategy}'s seminal work fully characterizes all strategyproof facility location mechanisms for this setting. 

%In a classical facility location problem, a planner is building a public facility which is beneficial for agents located on the real line. The planner decides where to build the facility based on agents' report about their locations. However, agents may manipulate their reports if the facility will be located closer their locations by doing so. To prevent such manipulations, it is desired to design a \emph{strategyproof} mechanism, under which no agent can be better off by misreporting regardless what the others do.

%The facility location problem has long been studied as a paradigm problem in the field of social science. Agents in the problem are depicted by a special kind of preference, called ``single-peak" preference, which was first discussed by \cite{black1948rationale}. Precisely, for each agent, there is a single location she prefers the most, and for any two locations on the same side of the most preferred location, she prefers the one closer to the most preferred location. It is well-known that the choosing the median of all agents' reported locations is not only a strategyproof mechanism, but also resulting in the smallest total distance between the facility and agents' locations. Furthermore, \cite{moulin1980strategy} fully characterized all strategyproof facility location mechanisms in this setting.

In many scenarios, for example when each agent represents a community, agents may control more than one locations. The social planner still hopes to select a location to build the facility based on agents' reported locations. The facility location problem where each agent controls multiple locations was first introduced by \cite{procaccia2009approximate}. Choosing the social optimal solution, the median of all reported locations, is no longer strategyproof. \cite{procaccia2009approximate} provided an intuitive, strategyproof mechanism that relabels each reported location by the corresponding agent's most preferred location and then chooses the median of the relabelled locations. %\cite{lu2009tighter} provided analysis for some strategyproof mechanisms for this setting. 
The characterization of all strategyproof mechanisms remains an open question. 

Moreover, strategyproof mechanisms so far only guard against one type of strategic behavior, that is agents may lie about their locations in reporting (which we call manipulation in this paper). But when agents control more than one locations, they can choose to hide some locations (hiding) or report some locations more than once (replication) to influence the facility location to their benefit. Strategyproof mechanisms need to be robust against these richer strategic actions. The above-mentioned mechanism provided by \cite{procaccia2009approximate} is not strategyproof w.r.t. replication. \cite{hossainsurprising} provided examples showing that facility location mechanisms where each agent controls multiple locations may be strategyproof w.r.t. manipulation but not hiding, or vice versa.

In this paper, we fully characterize strategyproof mechanisms w.r.t. the richer strategic actions for  facility location problems where each agent controls multiple locations.
Intuitively, for each agent reporting multiple locations, one may treat her as an agent controlling a single location, her most preferred location, which is the median of her reported locations.
By doing so, all the strategyproof mechanisms for single location agents should also be strategyproof for multiple location settings.
One natural guess is that these are all strategyproof mechanisms we desired.
This is however not true. We show that there exists other strategyproof mechanisms that depend on not only agents' most preferred locations but also their other reported locations.

To fully characterize all strategyproof facility location mechanisms where each agent controls multiple locations, we first show that if one cannot distinguish which locations are reported by the same agent, referred as settings with non-identifying locations, all strategyproof mechanisms outputs a constant location that is independent of the reported locations.
Then for settings with identifying locations, we show a necessary property for any strategyproof mechanism: for each agent, the mechanism has at most two possible outputs fixing other agents' reports and the agent's most preferred location(s).
Further adding the Pareto efficiency condition, we derive a full characterization for strategyproof mechanisms.
Finally, we compare the result with the characterization for strategyproof mechanisms w.r.t. manipulation only, 
and discuss the group strategyproofness of the mechanisms.

\subsection{Related Work}
\cite{dekel2010incentive} investigated the framework of mechanism design problems in general learning-theoretic settings.
The facility location problem where each agent controls multiple locations is its special case for one-dimensional linear regression, and was first introduced by \cite{procaccia2009approximate}.
Some deterministic and randomize mechanisms are provided by \cite{procaccia2009approximate}, as well as \cite{lu2009tighter}, and they focused on the approximation ratio of these mechanisms, i.e. the total distance between the facility and all agents' locations compared with the optimal solution. 
\cite{hossainsurprising} extended these studies for considering another strategic behavior, hiding, and provided a strategyproof mechanism w.r.t. both manipulation and hiding.
On the other hand, \cite{conitzer2008anonymity,todo2011false} studied the strategyproofness w.r.t. false-name report, i.e. agents are able to create multiple anonymous identifiers.
It is similar to our replication when the identifier of the agent reporting each locations is unknown.
%Such a strategic behavior is very strong as the strategyproof mechanisms must be almost constant, while less practical as creating false identifiers is usually costly.

The facility location problems is also a framework for studying ``single-peak" preference.
As agents with such kind of preference are commonly seen in political decision making like voting \citep{moulin1980strategy}, location on networks \citep{schummer2002strategy}, and resource allocation \citep{guo2010strategy}.
An extension for the preference structure is the ``single-plateau" preference, where the most preferred locations of agents become intervals.
\cite{moulin1984generalized,berga1998strategy} provided corresponding characterizations of strategyproof mechanisms for more general social choice settings.
Following them, the characterizations were extended to high dimensional Euclidean space \citep{border1983straightforward,barbera1993generalized,barbera1998strategy} and convex spaces \citep{tang2018characterization}.
These works considered manipulation as the only strategic behavior, and agents are assumed to only report their ``peak" or ``plateau" preferences. 
In comparison, our work focuses on facility location settings where agents have ``single-plateau" preferences and richer actions spaces. 
%for agents, as they report all their controlled locations and are able to take more kinds of strategic behaviors.

\section{Our Model}

Following a brief explanation of notations, we introduce the facility location problem where each agent controls multiple locations, our richer strategic considerations, and desired properties of mechanisms for this problem.  
\paragraph{Notations.}
Let $[k]\overset{\triangle}{=}\{1,\dots,k\}$ be the set of first $k$ natural numbers.
Given $k$ real numbers $t_1,\dots,t_k\in\mathbb{R}\cup\{-\infty,+\infty\}$ and $t_1\leq\dots\leq t_k$, let $\median (t_1,\dots,t_k)$ be their median.
%w.l.o.g. assume $t_1\leq\dots\leq t_k$, 
When $k$ is odd, $\median (t_1,\dots,t_k)=t_{(k+1)/2}$, and when $k$ is even, $\median (t_1,\dots,t_k)=[t_{k/2},t_{k/2+1}]$, which is a number if $t_{k/2}=t_{k/2+1}$ and an interval otherwise.
Given a set $S$ of real numbers, possibly with some identical elements, $|S|$ denotes the size of $S$, and  $\median(S)$ is the median of all numbers in $S$.

\paragraph{Facility Location Problems.}
There is a set of agents, $[n]$.
Each agent $i\in [n]$ controls a set of locations, $\bar{D_i}=\{\bar{y}_{i,j}\}_{j\in[|\bar{D_i}|]}$, $\bar{y}_{i,j}\in\mathbb{R}$, on the real line. Each agent $i\in [n]$ is asked to report her set of locations to the principal. Let $D_i = \{y_{i,j}\}_{j\in [|D_i|]}$ denote agent $i$'s set of reported locations,with $y_{i,j}\in\mathbb{R}$. $D_i$ can differ from $\bar{D_i}$ in both size and values. We use $\mathcal{D} = \{D_i \}_{i\in [n]}$ to denote the set of locations reported by all agents and $N=\sum_{i\in [n]}|D_i|$ to represent the total number of reported locations. 
%Agents report their controlled locations to the principle, denoted $\mathcal{D} = \{D_i = \{y_{i,j}\}_{j\in [|D_i|]} \}_{i\in [n]}$, $y_{i,j}\in\mathbb{R}$.
%Let $N=\sum_{i\in [n]}|D_i|$ as the total number of reported locations.
The principal seeks a mechanism $\pi:\mathbb{R}^N \rightarrow \mathbb{R}$ such that $\pi(\mathcal{D})$ is a proper location for building a facility that will be used by all agents. 
Each agent incurs a loss 
\begin{equation}\label{eqn:loss}
    l(\pi(\mathcal{D}),\bar{D_i})= \sum\limits_{\bar{y}_{i,j}\in \bar{D_i}}|\pi(\mathcal{D})-\bar{y}_{i,j}|
\end{equation}
for using a facility located at $\pi(\mathcal{D})$.  %and hence wants to report $\{D_i\}$ to minimize her loss. 
This loss function means that an agent's loss is minimized when the facility locates within an interval (the median of the agent's locations) and strictly increases as the facility moves away from the interval on either side. The following proposition formalizes this property. 
%
%After building the facility at $\pi(\mathcal{D})$, each agent $i$ will suffer a loss $l(\pi(\mathcal{D}),\bar{D_i})$, and want to minimize the loss through strategically reporting some $D_i$ rather her real controlled locations. We focus on the social cost of each agent as her loss, that is
%$$l(\pi(\mathcal{D}),\bar{D_i}) = \sum\limits_{\bar{y}_{i,j}\in \bar{D_i}}|\pi(\mathcal{D})-\bar{y}_{i,j}|.$$
%
\begin{proposition}
%For facility location problems where each agent controls multiple locations, each agent has a ``single-plateau" loss function.
Loss function (\ref{eqn:loss}) represents ``single-plateau" preferences. Let $[y_i^0,y_i^1]=\median(\bar{D_i})$, where $y_i^0\leq y_i^1$,  represent the median interval of agent $i$'s locations. Then  
%For agent $i$ with locations $\bar{D_i}$, let %$[y_i^0,y_i^1]=\median(\bar{D_i})$ ($y_i^0\leq y_i^1$), then
\begin{itemize}
\item $\forall y \in [y_i^0,y_i^1]$, $l(y,\bar{D_i}) = \min_{y'\in\mathbb{R}}l(y',\bar{D_i})$, 
\item $\forall s_1,s_2\in\mathbb{R}$ satisfying $s_1<s_2< y_i^0$, $l(s_1,\bar{D_i})>l(s_2,\bar{D_i})>l(y_i^0,\bar{D_i})$, and 
\item $\forall h_1,h_2\in\mathbb{R}$ satisfying $h_1>h_2> y_i^1$, $l(h_1,\bar{D_i})>l(h_2,\bar{D_i})>l(y_i^1,\bar{D_i})$.
\end{itemize}
\end{proposition}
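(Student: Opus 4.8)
The plan is to reduce everything to elementary facts about the real function $f(y) := l(y,\bar D_i) = \sum_{j=1}^{k}|y-\bar y_{i,j}|$, where $k=|\bar D_i|$ and we relabel so that $\bar y_{i,1}\le\cdots\le\bar y_{i,k}$. First I would observe that $f$, being a finite sum of the convex piecewise-linear functions $y\mapsto|y-\bar y_{i,j}|$, is itself continuous, convex, and piecewise linear with all of its breakpoints lying in the multiset $\bar D_i$. On any open interval containing no point of $\bar D_i$, $f$ is affine with slope
\[
f'(y)=\big|\{\, j:\bar y_{i,j}<y \,\}\big|-\big|\{\, j:\bar y_{i,j}>y \,\}\big|\in\mathbb{Z},
\]
and, by convexity, this slope is non-decreasing in $y$. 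The whole statement then follows once we pin down the sign of $f'$ on each of the three regions $(-\infty,y_i^0)$, $(y_i^0,y_i^1)$, $(y_i^1,+\infty)$ and translate slope information back into (strict) monotonicity of $f$.

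For the left region I would argue as follows. For $y<y_i^0$ not lying in $\bar D_i$ we have $\{j:\bar y_{i,j}<y\}\subseteq\{j:\bar y_{i,j}<y_i^0\}$ and $\{j:\bar y_{i,j}>y\}\supseteq\{j:\bar y_{i,j}\ge y_i^0\}$, hence $f'(y)\le|\{j:\bar y_{i,j}<y_i^0\}|-|\{j:\bar y_{i,j}\ge y_i^0\}|$. Now invoke the definition of $\median(\bar D_i)$: when $k$ is odd, $y_i^0=y_i^1=\bar y_{i,(k+1)/2}$, so the first count is at most $(k-1)/2$ and the second is at least $(k+1)/2$, giving $f'(y)\le-1$; when $k$ is even, $y_i^0=\bar y_{i,k/2}$, so the first count is at most $k/2-1$ and the second is at least $k/2+1$, giving $f'(y)\le-2$. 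In either case $f'(y)\le-1$ throughout $(-\infty,y_i^0)$, so by continuity $f$ is strictly decreasing on $(-\infty,y_i^0]$; applying this to $s_1<s_2<y_i^0$ and then to $s_2<y_i^0$ yields the second bullet. The third bullet is the mirror image: the symmetric computation gives $f'(y)\ge1$ on $(y_i^1,+\infty)$, so $f$ is strictly increasing on $[y_i^1,+\infty)$.

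It remains to handle the plateau $[y_i^0,y_i^1]$. If $y_i^0=y_i^1$ (automatic for odd $k$) there is nothing further to prove; otherwise $k$ is even with $\bar y_{i,k/2}=y_i^0<y_i^1=\bar y_{i,k/2+1}$, the open interval $(y_i^0,y_i^1)$ contains no point of $\bar D_i$, and for $y$ in it exactly the indices $1,\dots,k/2$ have $\bar y_{i,j}<y$ while exactly the indices $k/2+1,\dots,k$ have $\bar y_{i,j}>y$, so $f'(y)=0$ and by continuity $f$ is constant on all of $[y_i^0,y_i^1]$. Combining the three regions, the value $f$ takes on this plateau is strictly smaller than its value anywhere outside, i.e. $\min_{y'\in\mathbb{R}}l(y',\bar D_i)=l(y,\bar D_i)$ for every $y\in[y_i^0,y_i^1]$, which is the first bullet. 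The only delicate part of the argument is the bookkeeping around the median definition in the presence of repeated reported locations together with the even/odd case split; once the slope bounds ($\le-1$, $=0$, $\ge1$) are in hand, the passage to strict monotonicity and to identifying the global minimum is a routine property of continuous convex piecewise-linear functions.
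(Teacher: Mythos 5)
Your proof is correct. The paper states this proposition without any proof, treating it as a standard fact about sums of absolute values, so there is nothing to compare against; your slope-counting argument for the convex piecewise-linear function $f(y)=\sum_j|y-\bar y_{i,j}|$ is the canonical way to establish it, and you carry out the only genuinely delicate parts correctly: the even/odd case split in the definition of $\median(\bar D_i)$, the strict bounds $f'\le -1$ and $f'\ge 1$ off the plateau (which is what upgrades monotonicity to \emph{strict} monotonicity, as the second and third bullets require), and the observation that $(y_i^0,y_i^1)$ contains no point of $\bar D_i$ so that $f$ is constant there.
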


\paragraph{Strategic considerations.}
Agents want to report $D_i$ to minimize their loss. We allow three types of agent strategic behavior:
\begin{itemize}
    \item \textbf{Manipulation.} Each agent $i\in [n]$ may report different value for each of her controlled locations. This is the strategic behavior usually considered in the literature. 
    \item \textbf{Replication.} Each agent $i$ may report one or more of her locations for more than once. Note that even if an agent reports her controlled locations without replication, it is possible that some locations appear for more than once.
    \item \textbf{Hiding.} Each agent $i$ may choose to no report some of her controlled locations.
\end{itemize}
The combination of these three types of strategic behavior allows agents to report a set of locations with any size and any value.

\paragraph{Desirable properties of mechanisms.}
The principle hopes to find a mechanism $\pi$ that discourages strategic behavior of agents. We define three notions of strategyproofness that we'll consider. The notions of strategyproofness are with respect to one or more of the three types of strategic behavior. 
\begin{definition}
A mechanism $\pi(\mathcal{D})$ is \textbf{strategyproof} w.r.t. some set of strategic behavior if no agent can achieve less loss by deviating from truthfully reporting to a strategic action in the set, regardless of the reports of the other agents.
\end{definition}

\begin{definition}
A mechanism $\pi(\mathcal{D})$ is \textbf{group strategyproof} w.r.t. a set of strategic behavior if no coalition of agents can simultaneously adopt strategic actions in the set such that every agent in the coalition is strictly better off, regardless of the reports of the other agents.
\end{definition}

\begin{definition}
A mechanism $\pi(\mathcal{D})$ is \textbf{strong group strategyproof} w.r.t. a set of strategic behavior if no coalition of agents can simultaneously adopt strategic actions in the set such that no agent in the coalition is strictly worse off and some agent in the coalition is strictly better off, regardless of the reports of the other agents.
\end{definition}

% Notice that one may consider another definition of (strong) group strategyproofness by requiring nonexistence of any coalition such that no agent in the coalition is strictly worse off and some agent in the coalition is strictly better off.
% With either definition, a group strategyproof mechanism must be a strategyproof mechanism, but not vice versa.
Note that a group strategyproof mechanism must be strategyproof and a strong group strategyproof mechanism must be group strategyproof, but not vice versa.
We will discuss the corresponding difference in Section \ref{sec:gsp}.

In addition to strategyproofness, two other properties are also desirable for facility location mechanisms.

\begin{definition}
A facility location mechanism $\pi(\mathcal{D})$ is \textbf{anonymous} if its output is symmetric w.r.t. all agents.
\end{definition}

\begin{definition}
A facility location mechanism $\pi(\mathcal{D})$ is \textbf{efficient} if its output is Pareto optimal to all agents, i.e. there does not exist another location that is strictly better for at least one agent and not worse for all other agents.
\end{definition}

When each agent only controls a single location, denoted by $y_i$ for $i\in [n]$, \cite{moulin1980strategy} characterizes that all facility location mechanisms that are anonymous and strategyproof w.r.t. manipulation are of the form
\begin{equation}\label{eqn:Moulin1}
\pi(\mathcal{D}) = \median(y_1,\dots,y_n,\alpha_1,\dots,\alpha_{n+1}), \forall y
\end{equation}
where $\alpha_1,\dots,\alpha_{n+1} \in \mathbb{R}\cup\{-\infty, +\infty\}$ are constants;
all facility location mechanisms that are anonymous, efficient, and strategyproof w.r.t. manipulation take the form 
\begin{equation}\label{eqn:Moulin2}
\pi(\mathcal{D}) = \median(y_1,\dots,y_n,\alpha_1,\dots,\alpha_{n-1}), \forall y
\end{equation}
where $\alpha_1,\dots,\alpha_{n-1} \in \mathbb{R}\cup\{-\infty, +\infty\}$ are constants.

\section{Strategyproof Facility Location Mechanisms for Non-identifying Locations}\label{sec:unidentify}

We first show that when each agent can control multiple locations, being able to identify which locations are reported by the same agent is necessary for developing non-trivial strategyproof mechanisms. We use the term non-identifying locations to represent the case when one cannot tell which agent reports which location, or more formally, for any reported location $y_i^j\in\mathcal{D}$, 
%the identifier of the agent who reports a location cannot be distinguished, that is, $\forall y_i^j\in\mathcal{D}$ 
one cannot distinguish the agent $i$ who has reported the location.  
%
%This is different from the case when each agent only controls a single location, where anonymity is equivalent to the condition of unknown identifiers. 
%For facility location problems where each agent controls a single location, the identifier of each agent is not important when considering anonymous mechanisms.
%When each agent controls multiple locations, the condition of unknown identifiers is stronger than anonymity. It means that, for any reported location $y_i^j\in\mathcal{D}$, 
%the identifier of the agent who reports a location cannot be distinguished, that is, $\forall y_i^j\in\mathcal{D}$ 
%one cannot distinguish the agent $i$ who has reported the location.
% With this condition, any strategyproof mechanism w.r.t. simply manipulation must be some constant, i.e. not depend on agents' reports, once more than three locations are reported in total.
We show below that with non-identifying locations, all strategyproof mechanisms must be the trivial constant mechanisms.

\begin{theorem}\label{thm:unidentified_constant}
For facility location problems where each agent controls multiple locations, if the reported locations are non-identifying, 
%i.e. $\forall y_i^j\in\mathcal{D}$ one cannot distinguish the agent $i$ who reports the location, and $|\mathcal{D}|>3$, 
then any mechanism that is strategyproof w.r.t. manipulation must output a constant location.
\end{theorem}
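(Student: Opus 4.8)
The plan is to show that non-identifiability forces the mechanism's output to be invariant under adding or moving a single location, from which constancy follows. First I would fix an arbitrary profile $\mathcal{D}$ and a target location $t$, and argue that, because locations are non-identifying, the mechanism $\pi$ is really a function of the multiset of all reported locations only; in particular, whether a given location ``belongs'' to agent $i$ or to agent $i'$ cannot affect the output. The key move is to exploit that a single agent controlling several locations can, via manipulation alone (keeping the number of her reports fixed), shift her reported values arbitrarily, and then to set up two profiles that induce the same multiset of locations but in which a strategically-moved location is attributed to different agents.

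Concretely, I would take a profile in which some agent $i$ controls at least two locations (if every agent controls exactly one location the problem degenerates, and the claim is a special case of Moulin's characterization with the anonymity-forced constants — but actually here we want the stronger conclusion, so I would instead directly build the profiles I need by having one agent replicate, which is allowed). Suppose $\pi(\mathcal{D}) = a$ and consider moving one of agent $i$'s locations; by strategyproofness w.r.t.\ manipulation for agent $i$, the new output $a'$ cannot be strictly closer to $i$'s peak than $a$, and applying strategyproofness in the reverse direction (from the new profile back to $\mathcal{D}$) gives that $a$ cannot be strictly closer to $i$'s peak either. The crux is then that, under non-identifying locations, I can reassign the moved location to a \emph{different} agent $i'$ whose peak lies on the opposite side, obtaining the same physical multiset of reports; since the mechanism sees only the multiset, the output is the same number, but now the same inequality must hold with respect to $i'$'s peak. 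Combining the two peak-monotonicity constraints — closeness cannot strictly decrease toward $i$'s peak nor toward $i'$'s peak, and these peaks can be placed to straddle any candidate shift — pins the output down.

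The main obstacle I anticipate is making the ``reassign to a different agent'' step fully rigorous: I need to guarantee that there genuinely exist two agents whose single-peak preferences point in opposite directions relative to the perturbation, and that every intermediate multiset I pass through is realizable as an honest-or-strategic report profile under the non-identifying model, so that each application of strategyproofness is legitimate. I would handle this by a chaining argument: show $\pi$ is unchanged when any one location is nudged by an arbitrarily small amount (using agents on both sides, which exist after at most adding locations via replication, an allowed action), conclude by density and continuity-of-the-argument that $\pi$ is locally constant in each coordinate, and then connect any two profiles by a finite sequence of such single-coordinate nudges together with replication/hiding steps that do not change the output. Since the whole reported space is path-connected under these moves, $\pi$ must be globally constant.
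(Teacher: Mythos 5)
There is a genuine gap at the crux of your argument. The constraints you can extract from a single move $x\to x'$ under all admissible reassignments of ownership are, writing $a=\pi(M)$, $a'=\pi(M')$ for the old and new multisets and $\delta(z)=|a-z|-|a'-z|$, exactly the family $-\delta(x')\le\sum_{z\in S}\delta(z)\le-\delta(x)$ over subsets $S$ of the unmoved locations: the right inequality comes from an agent truly owning $S\cup\{x\}$ who could manipulate $x$ into $x'$, the left from an agent truly owning $S\cup\{x'\}$ who could manipulate $x'$ into $x$. These inequalities do \emph{not} pin down $a=a'$. For three non-identifying locations the plain median $\median(y_1,y_2,y_3)$ is strategyproof w.r.t.\ manipulation under every ownership structure (a two-location owner's plateau always contains the median of all three, and the remaining cases are Moulin's), yet it is not constant; so it satisfies every one of your single-move constraints for every move while still changing its output. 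Even for large $N$ these constraints only limit how many unmoved locations may lie outside the interval between $a$ and $a'$; they are satisfiable with $a\ne a'$, so local constancy cannot be derived this way and the subsequent chaining argument has nothing to stand on. A correct proof must use the size of the instance somewhere (the statement genuinely fails for $N=3$), and your argument never does. A second, independent problem: the hypothesis is strategyproofness w.r.t.\ manipulation only, yet both your construction of ``agents on both sides'' and your connecting path invoke replication and hiding; these are not permitted deviations here, and they change the number $N$ of reported locations, i.e.\ the domain of $\pi$.

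For comparison, the paper's proof uses precisely the global structure you set aside. Because locations are non-identifying, the all-singletons ownership ($|D_i|=1$ for every $i$) is one admissible interpretation of any report profile, so Moulin's characterization already forces $\pi(\mathcal{D})=\median(y_1,\dots,y_N,\alpha_1,\dots,\alpha_{N+1})$. It then exhibits a single explicit profile --- one agent controlling $N-1$ locations, $N-2$ of them at $(\alpha_1+\alpha_2)/2$ and one at $\alpha_2$, against a singleton at $\alpha_1$ --- in which that agent strictly profits by manipulating her location at $\alpha_2$ unless all the $\alpha_k$ coincide; this step is where $N>3$ is used. If you want to salvage your outline, you would need to first establish the generalized-median form from the singleton interpretation (the step you mention only to dismiss) and only then bring in a multi-location owner.
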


\begin{proof}
For convenience, denote the reported data set $\mathcal{D}$ as $\{y_j\}_{j\in [N]}$.
Since the reported locations are non-identifying, it is possible that $|D_i|=1$ for each $i$.
Then according to \cite{moulin1980strategy}, any strategyproof mechanism has the form 
$$\pi(\mathcal{D}) = \median(y_1,\dots,y_N,\alpha_1,\dots,\alpha_{N+1}), \forall y.$$
We next prove that $\alpha_1=\dots=\alpha_{N+1}=\alpha$ for some $\alpha\in\mathbb{R}\cup\{-\infty,+\infty\}$, which means the mechanism always returns a constant $\alpha$.
Suppose otherwise, w.l.o.g, let $\alpha_1 < \alpha_2 \leq \alpha_3\dots \leq \alpha_{N+1}$.
Then we can construct an example where an agent could achieve smaller loss through manipulation.
Consider $\bar{D_1}= \{y_1=\alpha_1\}$ and $\bar{D_2} = \{y_2,\dots,y_{N}\}$ as the real locations controlled by agent $1$ and $2$ respectively, where $y_N = \alpha_2$ and $y_j=(\alpha_1+\alpha_2)/2$ for $j=2,\dots,N-1$.
Then truthfully reporting results in $\pi(\bar{D_1},\bar{D_2}) = \alpha_2$ and agent $2$ suffers a loss of $(N-2)(\alpha_2-\alpha_1)/2$.
If agent $2$ misreports his locations by manipulating $y_{N}$ to $y_{N}'=(\alpha_1+\alpha_2)/2$, then the mechanism will output $(\alpha_1+\alpha_2)/2$ and her loss becomes $(\alpha_2-\alpha_1)/2$, which is strictly smaller if $N>3$.
\end{proof}

\begin{corollary}
For facility location problems where each agent controls multiple locations, if the reported locations are non-identifying, 
%i.e. $\forall y_{i,j}\in\mathcal{D}$ one cannot distinguish the agent $i$ who reports the data, and $|\mathcal{D}|>3$, 
then any mechanism which is strategyproof w.r.t. manipulation, replication, and hiding, must output a constant location.
\end{corollary}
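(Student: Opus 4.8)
The plan is to derive the corollary directly from Theorem~\ref{thm:unidentified_constant}. The key observation is that strategyproofness with respect to a larger set of strategic behaviors implies strategyproofness with respect to any subset of those behaviors: if no agent can profit by deviating via manipulation, replication, or hiding, then in particular no agent can profit by deviating via manipulation alone. Hence any mechanism that is strategyproof w.r.t.\ all three types of strategic behavior is a fortiori strategyproof w.r.t.\ manipulation. Since the reported locations are non-identifying, Theorem~\ref{thm:unidentified_constant} then applies and forces the mechanism to output a constant location.

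So the proof is essentially one line: \emph{strategyproofness w.r.t.\ a superset of strategic actions $\Rightarrow$ strategyproofness w.r.t.\ manipulation $\Rightarrow$ (by Theorem~\ref{thm:unidentified_constant}, using non-identifiability) the mechanism is constant.} I would just make explicit that manipulation is one of the three behaviors in the set $\{$manipulation, replication, hiding$\}$, so the definition of strategyproofness w.r.t.\ that set quantifies over a class of deviations that includes all pure manipulations, and therefore the weaker conclusion ``no profitable manipulation exists'' holds.

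There is essentially no obstacle here; the only thing to be slightly careful about is that the notion of ``strategyproof w.r.t.\ a set of strategic behavior'' in the Definition is monotone in the set — a mechanism robust against a wider class of deviations is robust against any narrower class. This is immediate from the way the definition is phrased (``no agent can achieve less loss by deviating \dots to a strategic action in the set''), since shrinking the set only removes candidate deviations. Once that monotonicity is noted, Theorem~\ref{thm:unidentified_constant} does all the work.

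\begin{proof}
By definition, a mechanism that is strategyproof w.r.t.\ the set $\{$manipulation, replication, hiding$\}$ admits no profitable deviation of any of these three types, regardless of the other agents' reports. In particular, it admits no profitable manipulation, i.e.\ it is strategyproof w.r.t.\ manipulation. Since the reported locations are non-identifying, Theorem~\ref{thm:unidentified_constant} applies and shows that the mechanism must output a constant location.
\end{proof}
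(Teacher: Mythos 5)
Your proposal is correct and matches the paper's intent: the paper states this corollary without proof precisely because it follows immediately from Theorem~\ref{thm:unidentified_constant} via the monotonicity you describe (strategyproofness w.r.t.\ the set $\{$manipulation, replication, hiding$\}$ implies strategyproofness w.r.t.\ manipulation alone). Nothing further is needed.
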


Thus, in the rest of the paper, we focus on facility location problems with identifying locations. We note that identifying locations do not conflict with anonymity. Anonymity means that a mechanism's outcome is not affected by relabelling of the agents, while identifying locations only require that one knows which locations are reported by the same agent and the labels of the agents are not important. 

\section{Strategyproof Facility Location Mechanisms for Identifying Locations}\label{sec:sp}

In this section, we characterize mechanisms that are anonymous, efficient and strategyproof w.r.t. all three types of strategic behavior. We show that these mechanisms take the form of Moulin's characterization (\ref{eqn:Moulin2}) where the median of each agent's reported locations is used as the representative location for the agent. We further show in Section \ref{sec:mani} that the family of anonymous, efficient and strategyproof mechanisms remain the same even if only manipulation is considered, when agents control the same number of locations. 

We develop our results by first characterizing mechanisms that are anonymous and strategyproof w.r.t. all three types of strategic behavior. Lemma \ref{lem:most_two_output} shows that from agent $i$'s perspective, fixing other agents' reports, any anonymous and strategyproof mechanism can have at most two different outputs if the median of the agent's reported locations, $[y_i^0,y_i^1]=\median(D_i)$, is unchanged.

\begin{lemma}\label{lem:most_two_output}
For facility location problems where each agent controls multiple locations, if a mechanism $\pi$ is anonymous and strategyproof w.r.t. manipulation, replication, and hiding, then from any agent $i$'s perspective, for any $y_i^0\leq y_i^1\in\mathbb{R}$ and all $D_i$ with $[y_i^0,y_i^1]=\median(D_i)$, the output of the mechanism is either $\pi(\mathcal{D})\in [y_i^0,y_i^1]$ or \begin{equation}\label{eqn:charac_fl}
    \pi(D_i,D_{-i})=\left\{
    \begin{aligned}
    &s_i ~~~~~~~~~\mathrm{if}~l(s_i,D_i) < l(h_i,D_i);\\
    &h_i ~~~~~~~~~\mathrm{if}~l(s_i,D_i) > l(h_i,D_i);\\
    &s_i ~\mathrm{or}~ h_i ~~~\mathrm{otherwise} 
    \end{aligned}
    \right.
\end{equation}
for some $s_i,h_i\in\mathbb{R}$ that satisfies either $s_i< y_i^0\leq y_i^1< h_i$ or $s_i=h_i$. The values of $s_i$ and $h_i$ depend on the reports of other agents $D_{-i}$ as well as $y_i^0$ and $y_i^1$. 

%any reports of other agents $D_{-i}$, and any $y_i^0\leq y_i^1\in\mathbb{R}$, the output of the mechanism is either $\pi(\mathcal{D})\in [y_i^0,y_i^1]$ or there exist $s_i,h_i\in\mathbb{R}$ such that $s_i< y_i^0\leq y_i^1< h_i$ or $s_i=h_i$, and 
%\begin{equation}\label{eqn:charac_fl}
%    \pi(D_i,D_{-i})=\left\{
%    \begin{aligned}
%    &s_i ~~~~~~~~~\mathrm{if}~l(s_i,D_i) < l(h_i,D_i);\\
%    &h_i ~~~~~~~~~\mathrm{if}~l(s_i,D_i) > l(h_i,D_i);\\
%    &s_i ~\mathrm{or}~ h_i ~~~\mathrm{otherwise} 
%    \end{aligned}
%    \right.
%\end{equation}
%for all $D_i$ with $[y_i^0,y_i^1]=\median(D_i)$.
\end{lemma}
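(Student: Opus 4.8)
The plan is to convert the three-fold strategyproofness hypothesis into one clean statement and then extract the conclusion purely from the structure of the set of attainable outputs. Since manipulation, replication and hiding \emph{together} let agent $i$ submit an arbitrary multiset of locations, strategyproofness is equivalent to: whenever agent $i$'s true locations are $D_i$, the point $\pi(D_i,D_{-i})$ minimizes $l(\cdot,D_i)$ over the set $\{\pi(D_i',D_{-i}) : D_i' \text{ arbitrary}\}$. Now fix $i$, fix $D_{-i}$, fix an interval $[y_i^0,y_i^1]$, and let $\mathcal{O}$ denote the set of all outputs $\pi(D_i,D_{-i})$ obtained from reports $D_i$ with $\median(D_i)=[y_i^0,y_i^1]$. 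Everything below concerns this fixed data, and the promised $s_i,h_i$ will be read off from $\mathcal{O}$, so they indeed depend only on $D_{-i},y_i^0,y_i^1$.

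First I would prove a dichotomy: \emph{either every report with median $[y_i^0,y_i^1]$ yields an output in $[y_i^0,y_i^1]$, or none does}. Suppose $D_i^{(1)}$ yields $z_1\in[y_i^0,y_i^1]$ and $D_i^{(2)}$ (same median) yields $z_2$. Applying strategyproofness to an agent whose true locations are $D_i^{(2)}$ and who contemplates reporting $D_i^{(1)}$ gives $l(z_2,D_i^{(2)})\le l(z_1,D_i^{(2)})$; but $z_1\in[y_i^0,y_i^1]=\median(D_i^{(2)})$ already minimizes $l(\cdot,D_i^{(2)})$, so $l(z_2,D_i^{(2)})$ is minimal too, forcing $z_2\in[y_i^0,y_i^1]$. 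In the first case we land in the branch $\pi(\mathcal{D})\in[y_i^0,y_i^1]$ and are done.

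So assume the complementary case, in which \emph{every} output in $\mathcal{O}$ lies strictly outside $[y_i^0,y_i^1]$. Next I claim $\mathcal{O}$ contains at most one point below $y_i^0$ and at most one above $y_i^1$. If $z_1<z_2<y_i^0$ were both in $\mathcal{O}$, coming from reports $D_i^{(1)},D_i^{(2)}$, then an agent with true locations $D_i^{(1)}$ could report $D_i^{(2)}$; since $z_1<z_2<y_i^0$ and $[y_i^0,y_i^1]=\median(D_i^{(1)})$, the single-plateau property of the loss function gives $l(z_2,D_i^{(1)})<l(z_1,D_i^{(1)})$, a profitable deviation contradicting strategyproofness; the case above $y_i^1$ is symmetric. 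Let $s_i$ be the unique point of $\mathcal{O}$ below $y_i^0$ and $h_i$ the unique one above $y_i^1$ when these exist; since not both can be absent (otherwise we would be in the first case), set $s_i=h_i$ equal to the single value if only one side occurs. This gives $s_i<y_i^0\le y_i^1<h_i$ or $s_i=h_i$, as required.

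It remains to see which of $s_i,h_i$ is output for a given $D_i$. Such a report yields a value in $\{s_i,h_i\}$, and by instead submitting the report that achieves the other value — available by construction of $\mathcal{O}$ — the agent with true locations $D_i$ must not gain; hence $\pi(D_i,D_{-i})$ is the element of $\{s_i,h_i\}$ minimizing $l(\cdot,D_i)$, with either choice allowed under a tie. This is exactly display~(\ref{eqn:charac_fl}). The one point requiring care is conceptual rather than computational: the right object to analyze is the attainable-output set $\mathcal{O}$, not a closed form for $\pi$, and two reports with the same median interval induce single-plateau preferences with the same plateau but possibly different slopes off it — which is precisely why the conclusion must be stated through $l(\cdot,D_i)$ rather than merely through the median. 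I do not expect to need anonymity for this lemma, only the strategyproofness hypothesis.
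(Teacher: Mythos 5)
Your proof is correct and follows essentially the same route as the paper's: pairwise strategyproofness comparisons between reports sharing the median interval $[y_i^0,y_i^1]$, combined with the single-plateau structure of the loss, first force the inside/outside dichotomy and then leave at most one attainable output on each side of the interval, from which Eqn.~(\ref{eqn:charac_fl}) follows. Your remark that anonymity is not needed is also consistent with the paper, whose proof of this lemma never invokes it.
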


\begin{proof}
If a mechanism outputs the optimal location based an agent's reported locations or a constant location, the agent has no incentive to misreport.
Thus, we try to characterize strategyproof mechanisms beyond these two trivial types.

For each $i\in [n]$ and $D_{-i}$ fixed, let $[y_i^0,y_i^1]=\median(D_i)$ be the interval that any location inside is optimal for agent $i$'s real controlled locations.
Through manipulation, replication and hiding, agent $i$ is able to misreport any possible location sets.
Currently we focus on a special group of misreports, that is agent $i$ does not change the median value (either one or two value) of her controlled locations but may misreport any others, i.e. all $D_i'$ with $[y_i^0,y_i^1]=\median(D_i')$.

Suppose there are at least three different outputs $\pi(\{D_i',D_{-i}\})$ for all such $D_i'$ which do not belong to the interval $[y_i^0,y_i^1]$, w.l.o.g. let $\pi(\{D_i^1,D_{-i}\}) < \pi(\{D_i^2,D_{-i}\}) < \pi(\{D_i^3,D_{-i}\})$, $\pi(\{D_i^j,D_{-i}\})\notin [y_i^0,y_i^1]$ for $j=1,2,3$.
If $y_i^1 < \pi(\{D_i^2,D_{-i}\})$, then when other agents report $D_{-i}$, agent $i$ with real controlled locations $\bar{D_i} = D_i^3$ will misreport $D_i^2$ to obtain a smaller loss.
Otherwise, $y_i^0 > \pi(\{D_i^2,D_{-i}\})$, then when other agents report $D_{-i}$, agent $i$ with real controlled locations $\bar{D_i} = D_i^1$ will misreport $D_i^2$ to obtain a smaller loss.
This is a contradiction, which means there are at most two different outputs for all such $D_i'$ which do not belong to the interval $[y_i^0,y_i^1]$, denoted by $s_i \leq h_i$. 
Furthermore, if there exists some $D_i^4$ such that $\pi(\{D_i^4,D_{-i}\}) \in [y_i^0,y_i^1]$, then for all $D_i'$, $\pi(\{D_i',D_{-i}\}) \in [y_i^0,y_i^1]$.
This means for all $D_i$, either $\pi(D_i,D_{-i})\in [y_i^0,y_i^1]$ or $\pi(D_i,D_{-i})\in \{s_i,h_i\}$.

We further consider the case that $s_i\neq h_i$.
Let $\pi(\{D_i^1,D_{-i}\}) =s_i$ and $\pi(\{D_i^2,D_{-i}\})=h_i$.
If $y_i^1 < s_i$, then when other agents report $D_{-i}$, agent $i$ with real controlled locations $\bar{D_i}=D_i^2$ will misreport $D_i^1$ to obtain smaller loss.
Similarly, if $y_i^0 > h_i$, then when other agents report $D_{-i}$, agent $i$ with real controlled locations $\bar{D_i}=D_i^1$ will misreport $D_i^2$ to obtain smaller loss.
Thus, we know $s_i<y_i^0\leq y_i^1<h_i$, and Eqn. (\ref{eqn:charac_fl}) is straightforward.
\end{proof}

The following example shows that a strategyproof mechanism indeed can have two different outputs, i.e. satisfying Eqn. (\ref{eqn:charac_fl}) for $s_i\neq h_i$. 

\begin{example}
For simplicity, suppose $\median(D_i)$ is a unique number for each $i\in[n]$.
For given constants $t_1<t_2\in\mathbb{R}$, let $P=|\{i| i\in[n],l(t_1,D_i)\leq l(t_2,D_i)\}|$ be the number of agents who prefer $t_1$ to $t_2$, and $Q=|\{i| i\in[n],l(t_1,D_i)> l(t_2,D_i)\}|$ be the number of agents who prefer $t_2$ to $t_1$.
Then the following mechanism is strategyproof:
\begin{equation}
    \pi(D_i,D_{-i})=\left\{
    \begin{aligned}
    &t^* ~~~\mathrm{if}~t^* > t_2 ~\mathrm{or}~t^* < t_1;\\
    &t_1 ~~~\mathrm{if}~t_1 < t^* < t_2 ~\mathrm{and}~P \geq Q;\\
    &t_2 ~~~\mathrm{if}~t_1 < t^* < t_2 ~\mathrm{and}~P < Q;\\
    \end{aligned}
    \right.
\end{equation}
where $t^* = \median(\median(D_1),\dots,\median(D_n),\alpha_1,\dots,\alpha_{n+1})$.
\end{example}

Notice that the mechanism in this example is not efficient. We formally prove this observation in the next lemma.

\begin{lemma}\label{lem:sp_eff_only_peak}
For facility location problems where each agent controls multiple locations, if a mechanism $\pi$ is anonymous, efficient and strategyproof w.r.t. manipulation, replication, and hiding, then for any $i\in [n]$, $D_{-i}$, $y_i^0\leq y_i^1\in\mathbb{R}$, there do not exist $s_i,h_i\in\mathbb{R}$ such that $s_i< y_i^0\leq y_i^1< h_i$, and $\pi$ satisfies Eqn. (\ref{eqn:charac_fl}) for all $D_i$ with $[y_i^0,y_i^1]=\median(D_i)$.
\end{lemma}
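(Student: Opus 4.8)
The plan is to argue by contradiction: assume that for some $i$, $D_{-i}$, $y_i^0\le y_i^1$ there are $s_i<y_i^0\le y_i^1<h_i$ with $\pi$ obeying Eqn.~(\ref{eqn:charac_fl}) for every $D_i$ with $\median(D_i)=[y_i^0,y_i^1]$, and produce a reported profile on which $\pi$ fails to be Pareto optimal. Two ingredients are set up first. (i) \emph{Agent $i$ can force either value.} For any $M\ge h_i$ and large $K$, the multiset $D_i=\{y_i^0\}^{K+1}\cup\{y_i^1\}\cup\{M\}^{K}$ has $\median(D_i)=[y_i^0,y_i^1]$, and $l(s_i,D_i)-l(h_i,D_i)=(K+1)(2y_i^0-s_i-h_i)+(2y_i^1-s_i-h_i)+K(h_i-s_i)$, whose $K$-order part is $2K(y_i^0-s_i)>0$; so for $K$ large, $l(h_i,D_i)<l(s_i,D_i)$ and Eqn.~(\ref{eqn:charac_fl}) gives $\pi(D_i,D_{-i})=h_i$. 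The mirror construction $\{y_i^1\}^{K+1}\cup\{y_i^0\}\cup\{m\}^K$ with $m\le s_i$ forces $\pi(D_i,D_{-i})=s_i$. (ii) \emph{A Pareto-optimality test.} Each loss $l(\cdot,D_k)$ is convex, so Pareto domination is witnessed by an infinitesimal move, and a location $z$ is Pareto optimal for a profile with median intervals $[a_k,b_k]$ iff $z$ lies in the interval spanned by $\min_k b_k$ and $\max_k a_k$. Since $a_i=y_i^0>s_i$ here, $s_i$ is Pareto optimal for $(D_i,D_{-i})$ iff some other agent $j$ has $b_j\le s_i$, and dually $h_i$ is Pareto optimal iff some other agent $j$ has $a_j\ge h_i$.

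The easy case is when $D_{-i}$ has no agent $j$ with $b_j\le s_i$ (or, symmetrically, no agent with $a_j\ge h_i$): by (i) force the output to $s_i$ (resp.\ $h_i$); by (ii) that output is not Pareto optimal for $(D_i,D_{-i})$, contradicting efficiency. The remaining case --- which I expect to be the crux --- is when $D_{-i}$ contains both an agent $j_1$ with $b_{j_1}\le s_i$ and an agent $j_2$ with $a_{j_2}\ge h_i$, so that both $s_i$ and $h_i$ are Pareto optimal and efficiency does not immediately bite.

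For that case the plan is to bootstrap outward. Force the output to $h_i$ and, freezing all other reports, apply Lemma~\ref{lem:most_two_output} to $j_2$, whose plateau $[a_{j_2},b_{j_2}]$ lies weakly right of $h_i$: if $j_2$ could steer the output into $[a_{j_2},b_{j_2}]$ it would strictly benefit (when $h_i<a_{j_2}$), violating strategyproofness, so $j_2$'s reachable outputs are two values $s_{j_2}\le h_{j_2}$ with $h_i=s_{j_2}<a_{j_2}\le b_{j_2}<h_{j_2}$ in the nondegenerate case. Now use (i) for $j_2$ to force the output out to $h_{j_2}>b_{j_2}$; by (ii) this new profile can be Pareto optimal only if a \emph{fresh} agent $j_3$ has $a_{j_3}\ge h_{j_2}>h_i$, and the argument repeats with a strictly larger forced value and a strictly-further-right agent at every step. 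With only $n$ agents the iteration must halt, and it can only halt at a profile whose output has no Pareto supporter on the relevant side --- the desired contradiction. The main obstacle, which the write-up must handle with care, is the bookkeeping of the degenerate branches of Lemma~\ref{lem:most_two_output} along this chain: when the forced value equals an endpoint of the next agent's plateau one instead invokes the ``$\pi\in[a_k,b_k]$'' alternative to push the output to that plateau's far endpoint, and when a plateau collapses to a point one switches to pushing in the opposite direction; one must verify that the sequence of agents never repeats and that the process terminates in a genuine efficiency violation rather than stalling.
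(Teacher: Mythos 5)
Your setup is sound: the forcing constructions in (i) are correct, your Pareto-optimality test in (ii) is the right characterization, and your ``easy case'' is essentially the paper's own closing step (after its Lemma~\ref{lem:even}, every agent's plateau either contains $[s_i,h_i]$ or sits strictly inside $(s_i,h_i)$, so no agent has $b_j\le s_i$ and the forced output $s_i$ fails exactly your test). Where you diverge is the two-sided case. The paper does not chain outward through successive agents; it proves the stronger Lemma~\ref{lem:even} by induction on $|\mathcal{I}_1|$ where $\mathcal{I}_1=\{j: y_j^0\le s_i,\ y_j^1<h_i\}$, and the engine of that induction is an \emph{anonymity} argument: replace the leftmost such agent $\hat{j}$'s report by agent $i$'s low-forcing report $D_i^1$, show the high output $h_i$ survives, and observe that if the low output then disappeared entirely, agent $i$ could imitate $\hat{j}$'s original report and, by anonymity, recover $s_i$ --- a profitable deviation. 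Your proposal never modifies the reports of any agent other than the one currently being ``pushed,'' and never invokes anonymity.

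That omission is where the gap is. In the bootstrap step you apply Lemma~\ref{lem:most_two_output} to $j_2$ and assert that its reachable outputs are $\{h_i,h_{j_2}\}$ with $h_{j_2}>b_{j_2}$ ``in the nondegenerate case.'' But Lemma~\ref{lem:most_two_output} explicitly allows $s_{j_2}=h_{j_2}$, i.e.\ the output may be \emph{constant} equal to $h_i$ over all of $j_2$'s reports with median $[a_{j_2},b_{j_2}]$; nothing rules this out locally, since $h_i\le a_{j_2}$ keeps the constant output Pareto optimal and leaves $j_2$ with no profitable deviation inside that median class. In that branch $j_2$ cannot push the output rightward and your chain stalls with no contradiction. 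The boundary branch $h_i=a_{j_2}$ stalls similarly: the lemma then only says all outputs lie in $[a_{j_2},b_{j_2}]$, not that $j_2$ can steer the output to $b_{j_2}$, and an output inside $j_2$'s plateau is still Pareto optimal, so there is again no progress. Your closing sentence flags these branches, but the proposed fixes (``invoke the $\pi\in[a_k,b_k]$ alternative to push to the far endpoint,'' ``push in the opposite direction'') are not licensed by anything you have established. Escaping these stalls appears to genuinely require perturbing other agents' reports and exploiting anonymity, as the paper's Lemma~\ref{lem:even} does; as written, your hard case is not closed.
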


% The proof of Lemma \ref{lem:sp_eff_only_peak} makes use of Lemma \ref{lem:even}, whose proof is given in Appendix \ref{sec:proof_of_lemma}.

The proof of Lemma \ref{lem:sp_eff_only_peak} makes use of Lemma \ref{lem:even}.

\begin{lemma}\label{lem:even}
For facility location problems where each agent controls multiple locations, suppose a mechanism $\pi$ is anonymous, efficient and strategyproof w.r.t. manipulation, replication, and hiding.
If there exist some $i\in [n]$, $D_{-i}$, $y_i^0\leq y_i^1\in\mathbb{R}$, $s_i,h_i\in\mathbb{R}$ such that $s_i< y_i^0\leq y_i^1< h_i$, and $\pi$ satisfies Eqn. (\ref{eqn:charac_fl}) for all $D_i$ with $[y_i^0,y_i^1]=\median(D_i)$, denote two index sets $\mathcal{I}_1=\{j|[y_j^0,y_j^1]=\median(D_j), y_j^0\leq s_i,y_j^1<h_i\}$ and $\mathcal{I}_2=\{j|[y_j^0,y_j^1]=\median(D_j), y_j^1\geq h_i,y_j^0>s_i\}$, then $\mathcal{I}_1=\emptyset,\mathcal{I}_2=\emptyset$.
\end{lemma}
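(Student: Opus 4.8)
The plan is to argue by contradiction: suppose $\mathcal{I}_1\neq\emptyset$ (the case $\mathcal{I}_2\neq\emptyset$ is symmetric), pick some $j\in\mathcal{I}_1$, and derive a violation of either efficiency or strategyproofness by having agent $j$ manipulate while keeping agent $i$'s report in the ``two-output'' regime described by Eqn.~(\ref{eqn:charac_fl}). The key observation is that for $j\in\mathcal{I}_1$ we have $y_j^0\le s_i$ and $y_j^1<h_i$, so agent $j$ strictly prefers $s_i$ to $h_i$: indeed $s_i\ge y_j^0$ would put $s_i$ on the plateau or to its left, while $h_i>y_j^1$ is strictly to the right of $j$'s plateau, and by the single-plateau Proposition the loss at $h_i$ strictly exceeds the loss at any point in $[\,\min(s_i,y_j^0),\,y_j^1\,]$, in particular at $s_i$. (If $s_i<y_j^0$ one still has $l(s_i,D_j)<l(h_i,D_j)$ because $h_i$ is farther from the plateau on the right than $s_i$ is on the left only needs care — I'll instead split on whether $s_i\in[y_j^0,y_j^1]$; if so $s_i$ is optimal for $j$ and certainly beats $h_i$, and if $s_i<y_j^0$ then $s_i$ lies left of the plateau and $h_i$ right of it, and I compare both to the plateau value, noting the output is forced to be one of $s_i,h_i$.)

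The heart of the argument: fix all reports of agents other than $i$ at $D_{-i}$, and consider agent $i$ ranging over all $D_i$ with $\median(D_i)=[y_i^0,y_i^1]$. By Eqn.~(\ref{eqn:charac_fl}) the output is $s_i$ whenever $l(s_i,D_i)<l(h_i,D_i)$ and $h_i$ whenever $l(s_i,D_i)>l(h_i,D_i)$; since $s_i<y_i^0\le y_i^1<h_i$, agent $i$ can choose $D_i$ to skew her median's ``weight'' toward either endpoint. Now think of the roles reversed and look at it from agent $j$'s side: agent $j\in\mathcal{I}_1$ controls $\bar D_j$ with $\median(\bar D_j)=[y_j^0,y_j^1]$. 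I will show the mechanism, restricted to agent $j$ varying her report (with everyone else, including $i$, fixed so that the $(s_i,h_i)$ regime is active for $i$), must also be able to produce both $s_i$ and $h_i$ as outputs, because by replication/manipulation agent $j$ can make her own reported median land anywhere — and then I invoke Lemma~\ref{lem:most_two_output} applied to agent $j$: the two possible outputs for agent $j$ are some $s_j\le h_j$ with $s_j<y_j^0\le y_j^1<h_j$ or $s_j=h_j$. The contradiction will come from confronting "$j$ strictly prefers $s_i$ to $h_i$" with the requirement that whichever of $s_i,h_i$ is output cannot be improved upon by $j$'s deviation: if $h_i$ is ever the output for $j$'s truthful report, $j$ deviates to force $s_i$; to make this a genuine contradiction I need to produce a single profile where $i$'s report forces output $h_i$ but $j$ has a deviation forcing $s_i$ while $i$'s report is held fixed — this is where I use that $h_i\ge y_i^1$ already sits at a specific location and efficiency forbids the output from being strictly worse for everyone than $s_i$ when $s_i$ is itself achievable; more precisely I'll show $s_i$ Pareto-dominates $h_i$ for the constructed profile (all agents in the profile weakly prefer $s_i$, and $j$ strictly prefers it), contradicting efficiency once I exhibit a profile whose honest output is $h_i$.

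The main obstacle is the coordination between the two "perspective" applications of the characterization: Lemma~\ref{lem:most_two_output} pins the output as a function of one agent's median, but here I need to track how the $(s_i,h_i)$ pair for agent $i$ moves as agent $j$'s report changes, and conversely. Concretely, the hard step is constructing an explicit profile $(D_i,D_j,D_{-\{i,j\}})$ in which (a) $\median(D_i)=[y_i^0,y_i^1]$ with $l(s_i,D_i)>l(h_i,D_i)$ so the honest output is $h_i$, (b) the remaining agents' reports are exactly the $D_{-i}$ witnessing the $(s_i,h_i)$ regime, and (c) agent $j$'s honest report is consistent with $j\in\mathcal{I}_1$, yet $j$ has an available manipulation (changing $D_j$ without leaving a neighborhood that keeps the $i$-regime intact) that flips the output to $s_i$. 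Making (b) and (c) simultaneously consistent — i.e., checking that putting agent $j$ into $D_{-i}$ with the $\mathcal{I}_1$-membership constraint does not already destroy the hypothesis that $\pi$ realizes Eqn.~(\ref{eqn:charac_fl}) for agent $i$ — is the delicate bookkeeping I expect to occupy most of the proof; I'll handle it by choosing $D_i$ with a large multiplicity at points straddling $[y_i^0,y_i^1]$ so that agent $i$'s "vote" dominates and the $s_i$-vs-$h_i$ decision is driven entirely by $l(s_i,D_i)$ vs $l(h_i,D_i)$, then perturbing only the weights inside $D_i$ and $D_j$ to move the output between the two values.
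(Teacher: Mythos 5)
There is a genuine gap. Your write-up correctly isolates the right starting fact (every $j\in\mathcal{I}_1$ strictly prefers $s_i$ to $h_i$) and correctly senses where the difficulty lies (the pair $(s_i,h_i)$ is a function of $D_{-i}$, so it can move or disappear the moment agent $j$ changes her report), but you then defer exactly that difficulty to ``delicate bookkeeping'' without supplying the idea that resolves it. Worse, the endgame you propose does not work: you want to exhibit a profile whose honest output is $h_i$ and argue that $s_i$ Pareto-dominates it. Under the contradiction hypothesis you only get to assume $\mathcal{I}_1\neq\emptyset$; you cannot also assume $\mathcal{I}_2=\emptyset$, and any agent in $\mathcal{I}_2$ (e.g.\ one whose plateau lies entirely to the right of $h_i$) strictly prefers $h_i$ to $s_i$. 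So $s_i$ need not Pareto-dominate $h_i$, and efficiency alone cannot close the argument for $n\geq 3$. (Efficiency does suffice for $n=2$, which is why the paper's two-agent case is short.)

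The missing idea in the paper's proof is an induction on $|\mathcal{I}_1|$ driven by an \emph{anonymity swap}. Let $\hat j$ be the agent in $\mathcal{I}_1$ with the leftmost median and let $D_i^1$ be a report of agent $i$ with $\pi(D_i^1,D_{\hat j},D_{-i,-\hat j})=s_i$. Have $\hat j$ report $D_{\hat j}'=D_i^1$. Strategyproofness (a two-sided ``no profitable deviation'' argument for $\hat j$) forces the $h_i$ branch to persist: $\pi(D_i^2,D_{\hat j}',D_{-i,-\hat j})=h_i$. If the $s$-branch were to vanish entirely, i.e.\ $\pi(D_i^1,D_{\hat j}',D_{-i,-\hat j})=h_i$, then agent $i$ with true set $D_i^1$ could misreport $D_{\hat j}$, and by anonymity $\pi(D_{\hat j},D_{\hat j}',D_{-i,-\hat j})=\pi(D_i^1,D_{\hat j},D_{-i,-\hat j})=s_i$, a profitable deviation. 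Hence the two-output regime survives with some $s_i'<y_i^0$, but now $\hat j\notin\mathcal{I}_1'$ since $\median(D_{\hat j}')=[y_i^0,y_i^1]$; repeating strictly decreases $|\mathcal{I}_1|$ until it hits the inductively excluded value. This swap, which turns agent $i$'s strategyproofness constraint into leverage against the new profile, is the step your plan lacks, and without it (or a substitute) the proposal does not constitute a proof.
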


\begin{proof}[Proof of Lemma \ref{lem:even}]
If $n=1$, an efficient mechanism should always output the median of the reported locations, which means $\mathcal{I}_1=\mathcal{I}_2=\emptyset$ by definition.

For $n=2$, w.o.l.g., assume there exists some $D_{2}$ with $\median(D_{2})=[y_2^0,y_2^1]$, $y_1^0,y_1^1$, and $s_1< y_1^0\leq y_1^1< h_1$, such that $\pi$ satisfies Eqn. (\ref{eqn:charac_fl}) for all $D_1$ with $[y_1^0,y_1^1]=\median(D_1)$.
This means there exist at least two special location sets $D_1^1$ and $D_1^2$ such that $\pi(D_1^1,D_2)=s_1$ and $\pi(D_1^2,D_2)=h_1$ (for example $D_1^1=\{s_1,y_1^0,y_1^1,y_1^1\}$ and $D_1^2=\{y_1^0,y_1^0,y_1^1,h_1\}$).
If $\mathcal{I}_1 \neq \emptyset$, that is $y_2^0\leq s_i$ and $y_2^1 < h_i$, then $\pi(D_1^2,D_2)=h_1$ is not efficient since any location between $y_1^0$ and $y_1^1$ is a better output for both agent $1$ and $2$.
If $\mathcal{I}_2 \neq \emptyset$, that is $y_2^1\geq h_i$ and $y_2^0 > s_i$, then $\pi(D_1^1,D_2)=s_1$ is not efficient since any location between $y_1^0$ and $y_1^1$ is a better output for both agent $1$ and $2$.
Thus, the result holds for $n=2$.

For $n\geq 3$, assume there exist some $i$, $D_{-i}$, $y_i^0\leq y_i^1$, and $s_i< y_i^0\leq y_i^1< h_i$, such that $\pi$ satisfies Eqn. (\ref{eqn:charac_fl}) for all $D_i$ with $[y_i^0,y_i^1]=\median(D_i)$.
Consider two special location sets $D_i^1,D_i^2$ satisfying $\pi(D_i^1,D_{-i})=s_i$, $\pi(D_i^2,D_{-i})=h_i$.
In the following proof, we shall derive contradictions for $|\mathcal{I}_1|= 1,\dots, n-1$ by induction, and the corresponding analysis for $\mathcal{I}_2$ is similar.

\paragraph{Step 1:} 
Suppose $|\mathcal{I}_1|= 1$.\\
This means there is only one $j\in [n]$, such that $[y_j^0,y_j^1]=\median(D_j)$, $y_j^0\leq s_i$ and $y_j^1 < h_i$.
Denote $D_{-i,-j}$ as the reported location sets by agents other than agent $i$ and $j$, then we rewrite the fact that $\pi(D_i^1,D_j,D_{-i,-j})=s_i$ and $\pi(D_i^2,D_j,D_{-i,-j})=h_i$.

Step 1.1: Moving $D_j$ inside $[s_i,h_i]$ doesn't change $h_i$.\\
Consider agent $j$'s another possible report $D_j'=D_i^1$.
Due to the efficiency, we know $\pi(D_i^1,D_j',D_{-i,-j}) > s_i \geq y_j^0$.
If $\pi(D_i^2,D_j',D_{-i,-j})>h_i$, then when agent $i$ reports $D_i^2$ and other agents report $D_{-i,-j}$, agent $j$ with real locations $\bar{D_j} = D_j'$ will misreport $D_j$ to obtain a smaller loss.
If $\pi(D_i^2,D_j',D_{-i,-j})<h_i$, then when agent $i$ reports $D_i^2$ and other agents report $D_{-i,-j}$, agent $j$ with real locations $\bar{D_j} = D_j$ will misreport $D_j'$ to obtain a smaller loss.
This means $\pi(D_i^2,D_j',D_{-i,-j})=h_i$.

Step 1.2: $h_i$ is the only possible output.\\
% Now for fixed $D_j',D_{-i,-j}$, if there exists $s_i'<y_i^0$ such that for some $D_i'$ with $[y_i^0,y_i^1]=\median(D_i')$, $\pi(D_i',D_j',D_{-i,-j})=s_i'$, then $\pi(D_i',D_j',D_{-i,-j})$ is not efficient since $s^*=\min_{k\in [n]}\{y_k^0|[y_k^0,y_k^1]=\median(D_k), s_i< y_k^0\leq y_k^1 < h_i\}$ is a better output for some agents including agent $i$ and not worse for others.
% This means for all $D_i$ with $[y_i^0,y_i^1]=\median(D_i)$, $\pi(D_i,D_j',D_{-i,-j})=h_i$.
Now for fixed $D_j',D_{-i,-j}$, suppose there exists $s_i'<y_i^0$ such that for some $D_i'$ with $[y_i^0,y_i^1]=\median(D_i')$, $\pi(D_i',D_j',D_{-i,-j})=s_i'$.
Let $s^*=\min_{k\in [n]}\{y_k^0|[y_k^0,y_k^1]=\median(D_k), s_i< y_k^0\leq y_k^1 < h_i\}$, and $k$ denote the corresponding agent with $y_k^0 = s^*$.
Due to the efficiency we must have $s_i' \geq s^*$ , which means there exists $k\neq i$ satisfying $y_k^0\leq s_i', y_k^1 < h_i$.
This means we can repeat Step 1.1 w.r.t. $D_k$ and new $[s_i',h_i]$ (for at most $n-2$ times since $k\neq i$ and $k\neq j$) until no such $k$.
We reuse the notation $D'_{-i}$ after this process (either with the repeated part or not) as the reports of agents other than $i$, and we have for all $D_i$ with $[y_i^0,y_i^1]=\median(D_i)$, $\pi(D_i,D_{-i}')=h_i$.

Step 1.3: There exists a beneficial misreport.\\
Here we prove for the case that the repeated process in Step 1.2 does not happen. (For the case that it does, we only need to replace $j$ and $D_i^1$ by the lass $k$ and $D_i'$ in the process.)
Precisely, $\pi(D_i^1,D_j',D_{-i,-j})=h_i$.
However, when agent $j$ reports $D_j'$ and other agents report $D_{-i,-j}$, agent $i$ with real locations $\bar{D_i}=D_i^1$ will misreport $D_j$, which leads to $\pi(D_j,D_j',D_{-i,-j})=\pi(D_j,D_i^1,D_{-i,-j})=\pi(D_i^1,D_j,D_{-i,-j})=s_i$ by the anonymity, to obtain a smaller loss.
This is a contradiction, meaning it is impossible that $|\mathcal{I}|= 1$.

\paragraph{Step 2.}
Suppose we have proved it is impossible that $|\mathcal{I}|= 1,\dots, t$ for some $t\geq 1$, we consider the case $|\mathcal{I}|= t+1$.\\
Let $\hat{j}=\mathop{\arg\min_{j\in\mathcal{I}_1}\{y_j^0|[y_j^0,y_j^1]=\median(D_j)\}}$.
Similarly, denote $D_{-i,-\hat{j}}$ as the reported location sets by agents other than agent $i$ and $\hat{j}$, and we rewrite the fact that $\pi(D_i^1,D_{\hat{j}},D_{-i,-\hat{j}})=s_i$ and $\pi(D_i^2,D_{\hat{j}},D_{-i,-\hat{j}})=h_i$.

Step 2.1: Moving $D_{\hat{j}}$ inside $[s_i,h_i]$ doesn't change $h_i$.\\
Consider agent $\hat{j}$'s another possible report $D_{\hat{j}}'=D_i^1$.
If $\pi(D_i^1,D_{\hat{j}}',D_{-i,-j}) < y_{\hat{j}}^0$, it is not efficient since at least $y_{\hat{j}}^0$ is a better output for all agents $j\in\mathcal{I}_1$ as well as agent $i$ and not worse for other agents.
If $\pi(D_i^2,D_{\hat{j}}',D_{-i,-\hat{j}})>h_i$, then when agent $i$ reports $D_i^2$ and other agents report $D_{-i,-\hat{j}}$, agent $j$ with real locations $\bar{D_{\hat{j}}}=D_{\hat{j}}'$ will misreport $D_{\hat{j}}$ to obtain a smaller loss.
If $y_{\hat{j}}^0\leq \pi(D_i^2,D_{\hat{j}}',D_{-i,-\hat{j}})<h_i$, then when agent $i$ reports $D_i^2$ and other agents report $D_{-i,-\hat{j}}$, agent $\hat{j}$ with real locations $\bar{D_{\hat{j}}}=D_{\hat{j}}$ will misreport $D_{\hat{j}}'$ to obtain a smaller loss.
This means $\pi(D_i^2,D_{\hat{j}}',D_{-i,-\hat{j}})=h_i$.

Step 2.2: There must be another possible output $s_i'<y_i^0$.\\
Now for fixed $D_{\hat{j}}',D_{-i,-\hat{j}}$, if for all $D_i$ with $[y_i^0,y_i^1]=\median(D_i)$, $\pi(D_i,D_{\hat{j}}',D_{-i,-\hat{j}})=h_i$, specifically we have $\pi(D_i^1,D_{\hat{j}}',D_{-i,-\hat{j}})=h_i$.
Then when agent $\hat{j}$ reports $D_{\hat{j}}'$ and other agents report $D_{-i,-\hat{j}}$, agent $i$ with real locations $\bar{D_i}=D_i^1$ will misreport $D_{\hat{j}}$, which leads to $\pi(D_{\hat{j}},D_{\hat{j}}',D_{-i,-\hat{j}})=\pi(D_{\hat{j}},D_i^1,D_{-i,-\hat{j}})=s_i$ by anonymity, to obtain a smaller loss.
This is a contradiction, meaning there must exist $s_i'<y_i^0$ such that for some $D_i'$ with $[y_i^0,y_i^1]=\median(D_i')$, $\pi(D_i',D_{\hat{j}}',D_{-i,-\hat{j}})=s_i'$.

Step 2.3: $\hat{j}\notin\mathcal{I}_1'$.\\
% If $s_i' < s_i$, then $y_{\hat{j}}^0\leq s_i' < s_i$ due to the efficiency, while $\pi(D_i',D_{\hat{j}},D_{-i,-\hat{j}})\geq s_i$ since it equals either $s_i$ or $h_i$.
% This means when agent $i$ reports $D_i'$ and other agents report $D_{-i,-\hat{j}}$, agent $\hat{j}$ with real locations $\bar{D_{\hat{j}}}=D_{\hat{j}}$ will misreport $D_{\hat{j}}'$ to obtain smaller loss.
% Thus, we must have $s_i'\geq s_i$.
% In other words, when other agents' reports change to $D_{\hat{j}}',D_{-i,-\hat{j}}$, there still exist $s_i',h_i'$, such that $s_i\leq s_i' < y_i^0\leq y_i^1 < h_i'=h_i$ and $\pi$ satisfies Eqn. (\ref{eqn:charac_fl}) for all $D_i$ satisfying that $|D_i|$ is an even number and $[y_i^0,y_i^1]=\median(D_i)$. 
Now for $D_{\hat{j}}',D_{-i,-\hat{j}}$, there still exist $s_i',h_i'$, such that $s_i' < y_i^0\leq y_i^1 < h_i'=h_i$ and $\pi$ satisfies Eqn. (\ref{eqn:charac_fl}) for all $D_i$ satisfying that $|D_i|$ is an even number and $[y_i^0,y_i^1]=\median(D_i)$. 
And for the corresponding new index set $\mathcal{I}_1' = \{j|[y_j^0,y_j^1]=\median(D_j), y_j^0\leq s_i',y_j^1<h_i\}$ (after replacing $D_{\hat{j}}$ by $D_{\hat{j}}'$), we have $\hat{j}\notin \mathcal{I}_1'$ since $\median(D_{\hat{j}}') = [y_i^0, y_i^1]$.

Step 2.4: Repeating previous steps leads to $|\mathcal{I}_1'|=t$.\\
Finally, if $|\mathcal{I}_1'| > t$, we can repeatedly replace the report of the agent whose location set has the leftmost median among those agents in $\mathcal{I}_1'$ in the same way until $|\mathcal{I}_1'| \leq t$, and the same analysis still holds.
Since such an index set satisfies $|\mathcal{I}_1'| \leq n-1$, after at most $n-t-1$ times of such replacing, we must have $|\mathcal{I}_1'| = t$.
However, we have already proved that it is impossible for $|\mathcal{I}_1| = t$.
By induction, we provide contradictions for $|\mathcal{I}|= 1,\dots, n-1$, and this completes the proof of $n\geq 3$.
\end{proof}

\begin{proof}[Proof of Lemma \ref{lem:sp_eff_only_peak}]
Assume there exist some $i$, $D_{-i}$, $y_i^0\leq y_i^1$, and $s_i< y_i^0\leq y_i^1< h_i$, such that $\pi$ satisfies Eqn. (\ref{eqn:charac_fl}) for all $D_i$ with $[y_i^0,y_i^1]=\median(D_i)$.
According to Lemma \ref{lem:even}, we know for any $j\in [n]$, let $[y_j^0,y_j^1]=\median(D_j)$, either $y_j^0\leq s_i < h_i \leq y_j^1$, or $s_i< y_j^0\leq y_j^1 < h_i$.
Let $\mathcal{I}=\{j|[y_j^0,y_j^1]=\median(D_j), s_i< y_j^0\leq y_j^1 < h_i\}$, $\mathcal{I}\neq \emptyset$ since $i\in\mathcal{I}$.
Let $s^*=\min_{j\in\mathcal{I}}\{y_j^0|[y_j^0,y_j^1]=\median(D_j)\}$.
Then for any $D_i$ satisfying $\pi(D_i,D_{-i})=s_i$, it is not efficient since $s^*$ is a better output for all agent $j\in\mathcal{I}$ while not worse for other agent.
This is a contradiction to the efficiency condition.
\end{proof}

Lemma \ref{lem:sp_eff_only_peak} indicates that any mechanism $\pi$ that is anonymous, efficient, and strategyproof, must only depend on the optimal location for each agent.
In other words, agents only need to report their most preferred locations, i.e. $\median(D_i)$ for $i\in[n]$.
With \cite{moulin1980strategy}'s results, we have the complete characterization.

\begin{theorem}\label{thm:sp_charac}
For facility location problems where each agent controls multiple locations, a mechanism $\pi$ is anonymous, efficient and strategyproof w.r.t. manipulation, replication, and hiding, if and only if there exist $\alpha_1,\dots,\alpha_{n-1}\in\mathbb{R}\cup\{-\infty,+\infty\}$, $\beta\in[0,1]$, for $\forall D_1, \dots, D_n$,
\begin{equation}\label{eqn:sp}
  \pi(D_1,\dots,D_n)=\median(y_1^*,\dots,y_n^*,\alpha_1,\dots,\alpha_{n-1})  
\end{equation}
where $y_i^*=\beta y_i^0+(1-\beta)y_i^1$ with $[y_i^0,y_i^1]=\median(D_i)$, for $i=1,\dots,n$.
\end{theorem}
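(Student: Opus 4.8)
\textbf{Proof proposal for Theorem \ref{thm:sp_charac}.}

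The plan is to establish the two directions separately, leaning heavily on the preceding lemmas for the harder ($\Rightarrow$) direction. For the ($\Leftarrow$) direction, I would verify directly that any mechanism of the form (\ref{eqn:sp}) is anonymous, efficient, and strategyproof. Anonymity is immediate since the median is symmetric in its arguments. For efficiency, the output always lies in $[\min_i y_i^0, \max_i y_i^1]$ (because the $n$ values $y_i^*$ lie there and the $n-1$ phantoms cannot pull the median of $2n-1$ numbers outside the range spanned by the $y_i^*$'s), and by the single-plateau property (the Proposition) any point in that interval is Pareto optimal — moving left hurts the agent whose $y_i^0$ is the minimum, moving right hurts the agent whose $y_i^1$ is the maximum. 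For strategyproofness, observe that each agent $i$'s loss, as a function of the facility location, is minimized on $[y_i^0, y_i^1] = \median(\bar D_i)$ and is monotone on either side; since $y_i^* = \beta y_i^0 + (1-\beta) y_i^1$ is a continuous, monotone ``selector'' of agent $i$'s plateau, and since the outer median is a monotone function of $y_i^*$, agent $i$ can only move the facility toward or away from her plateau by changing $y_i^*$, and moving it away never helps. The key point is that whatever report $D_i'$ agent $i$ submits, it only enters (\ref{eqn:sp}) through the single scalar $y_i^{*\prime} = \beta\,\median(D_i')^0 + (1-\beta)\,\median(D_i')^1$, so this reduces exactly to Moulin's single-location strategyproofness argument for (\ref{eqn:Moulin2}): the agent wants $y_i^*$ as close to her true plateau as possible, and reporting truthfully achieves $y_i^* \in [y_i^0, y_i^1]$, which is weakly optimal.

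For the ($\Rightarrow$) direction, suppose $\pi$ is anonymous, efficient, and strategyproof w.r.t.\ all three behaviors. By Lemma \ref{lem:sp_eff_only_peak}, for every agent $i$, every $D_{-i}$, and every $y_i^0 \le y_i^1$, the mechanism's output on all $D_i$ with $\median(D_i) = [y_i^0, y_i^1]$ is \emph{not} of the non-trivial two-valued form; combined with Lemma \ref{lem:most_two_output}, this forces $\pi(D_i, D_{-i}) \in [y_i^0, y_i^1]$ whenever the output would otherwise depend on $D_i$ — in other words, $\pi$ depends on $D_i$ only through $\median(D_i)$. Iterating this over all agents shows $\pi$ is a function of $(\median(D_1), \dots, \median(D_n))$ alone. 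The next step is to reduce to the single-peaked case: when every $\median(D_i)$ is a singleton $\{y_i\}$, the mechanism is exactly a facility location mechanism on single locations that is anonymous, efficient, and strategyproof w.r.t.\ manipulation (strategyproofness here follows because any manipulation of a single-location agent is a special case of the allowed behaviors), so by Moulin's characterization (\ref{eqn:Moulin2}) it equals $\median(y_1, \dots, y_n, \alpha_1, \dots, \alpha_{n-1})$ for some phantoms.

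It then remains to pin down the behavior on \emph{plateau} (two-valued median) reports and show it is governed by a single $\beta \in [0,1]$ consistent across all agents and all profiles. Here I would argue as follows: fix agent $i$ with $\median(D_i) = [y_i^0, y_i^1]$, $y_i^0 < y_i^1$, and all others reporting singletons. As we vary the singleton reports of the others, the output is a function of $[y_i^0, y_i^1]$ that, by the single-peaked analysis already done, must coincide with $\median(\,\cdot\,, y_2, \dots, y_n, \alpha_1, \dots, \alpha_{n-1})$ evaluated at some selected point of $[y_i^0, y_i^1]$; strategyproofness of agent $i$ (who can shrink or shift her plateau, or report a singleton inside it) forces the selected point to be a fixed monotone function of the interval, and a continuity/consistency argument across overlapping intervals and across different agents (using anonymity to transfer the selector from one agent to another) shows this function must be affine with a common slope, i.e.\ $y_i^* = \beta y_i^0 + (1-\beta) y_i^1$. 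The main obstacle I anticipate is precisely this last step — ruling out a ``$\beta$ that depends on the agent, on $D_{-i}$, or on the location of the plateau'' — which requires carefully chaining strategyproofness constraints (an agent mimicking another agent's report via anonymity, an agent expanding a singleton into a plateau and back) to propagate a single constant $\beta$ everywhere; the rest is bookkeeping on top of Moulin's theorem and the two lemmas.
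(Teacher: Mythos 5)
Your proposal is correct and follows essentially the same route as the paper: the forward direction combines Lemma \ref{lem:most_two_output} and Lemma \ref{lem:sp_eff_only_peak} to show the mechanism depends on each $D_i$ only through $\median(D_i)$ and then invokes Moulin's characterization, while the reverse direction is a direct verification. You actually spell out more than the paper does on the tie-breaking step (pinning down a single constant $\beta$ across agents and profiles), a point the paper itself glosses over by deferring to Moulin's 1984 treatment of single-plateau tie-breaking rules.
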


Notice that if $\median(D_i)$ is an interval for some agent $i$, any value in the interval can be regarded as agent $i$'s optimal location.
For simplicity, we only include a tie-breaking rule based on an arbitrary constant $\beta$, which is independent of ($\alpha_1,\dots,\alpha_{n-1}$) and guarantees the strategyproofness.
\cite{moulin1984generalized} provided more general characterizations for strategyproof social choice mechanisms where each agent reports an interval as her ``single-plateau" preference, which deal with the tie-breaking rules carefully.

\subsection{Strategyproofness w.r.t. Manipulation Only}\label{sec:mani}
In most previous studies on facility location problems where each agent controls multiple locations, manipulation is considered as the only strategic behavior that agents may take.
Although some strategyproof mechanisms w.r.t. manipulation are discussed, there is no characterization result.
To characterize such strategyproofness, we further assume that each agent control the same number of locations.

\begin{theorem}\label{thm:simply_mani_charac}
For facility location problems where each agent controls same number of multiple locations, a mechanism $\pi$ is anonymous, efficient and strategyproof w.r.t. manipulation, if and only if there exist $\alpha_1,\dots,\alpha_{n-1}\in\mathbb{R}\cup\{-\infty,+\infty\}$, $\beta\in[0,1]$, for $\forall D_1, \dots, D_n$,
$$\pi(D_1,\dots,D_n)=\median(y_1^*,\dots,y_n^*,\alpha_1,\dots,\alpha_{n-1})$$
where $y_i^*=\beta y_i^0+(1-\beta)y_i^1$ with $[y_i^0,y_i^1]=\median(D_i)$, for $i=1,\dots,n$.
\end{theorem}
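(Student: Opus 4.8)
The plan is to prove Theorem~\ref{thm:simply_mani_charac} by showing that, under the assumption that every agent controls the same number of locations, strategyproofness w.r.t.\ manipulation alone is already strong enough to force the same structure we obtained in Theorem~\ref{thm:sp_charac}. The ``if'' direction is immediate: any mechanism of the stated form depends only on the medians $\median(D_i)$, and by \cite{moulin1980strategy} the generalized median over the representative points $y_i^*$ together with the $n-1$ constants is anonymous, efficient, and strategyproof w.r.t.\ manipulation (replacing each $D_i$ by a multiset with the same median is a manipulation move that leaves the output fixed, so there is nothing extra to check). The real content is the ``only if'' direction, and the strategy there is to reduce manipulation-strategyproofness to the hypotheses already used in Section~\ref{sec:sp}, so that Lemma~\ref{lem:most_two_output}, Lemma~\ref{lem:even}, and Lemma~\ref{lem:sp_eff_only_peak} can be invoked essentially verbatim.

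First I would re-examine which steps in the proofs of Lemmas~\ref{lem:most_two_output}--\ref{lem:sp_eff_only_peak} actually use replication or hiding, as opposed to pure manipulation. The key observation is that when all agents control the same number $m$ of locations, the set of misreports available to an agent via manipulation alone is exactly $\{D_i' : |D_i'| = m\}$, and crucially this family still contains, for any target median interval $[y_i^0,y_i^1]$ and any desired ``witness'' configuration, a multiset of size $m$ realizing it (as long as $m\ge 2$; the case $m=1$ is literally Moulin's theorem). So the argument of Lemma~\ref{lem:most_two_output} goes through once we check that the witness location sets used there --- e.g.\ configurations with a prescribed median and an extra mass pushing the outcome to $s_i$ or $h_i$ --- can all be taken of size exactly $m$. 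The anonymity-swap argument in Lemma~\ref{lem:even} (Steps 1.3, 2.2: agent $i$ misreports $D_j$ or $D_{\hat j}$) is the delicate point: it requires agent $i$ to be able to report another agent $j$'s location set. Under the equal-size assumption $|D_j| = |D_i| = m$, this is a legal manipulation for agent $i$, so the swap still works. I would therefore restate Lemmas~\ref{lem:most_two_output}, \ref{lem:even}, \ref{lem:sp_eff_only_peak} with ``strategyproof w.r.t.\ manipulation, replication, and hiding'' weakened to ``strategyproof w.r.t.\ manipulation'' plus ``all agents control $m$ locations'', and verify that every misreport invoked in the proofs has cardinality $m$.

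With those three lemmas re-established in the manipulation-only setting, the rest is identical to the derivation of Theorem~\ref{thm:sp_charac}: Lemma~\ref{lem:sp_eff_only_peak} says an anonymous, efficient, manipulation-strategyproof $\pi$ cannot exhibit the two-valued behavior of Eqn.~(\ref{eqn:charac_fl}) with $s_i < h_i$, so by Lemma~\ref{lem:most_two_output} it must always output a point in $\median(D_i)$ or have $s_i = h_i$, meaning $\pi$ depends on $D_i$ only through $\median(D_i)$; then apply \cite{moulin1980strategy} to the induced single-location mechanism on the representative points, introducing the tie-breaking parameter $\beta$ exactly as in Theorem~\ref{thm:sp_charac}. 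One subtlety I would address explicitly: when $\median(D_i)$ is a nondegenerate interval (possible only when $m$ is even), agents reporting different multisets of size $m$ with the same median interval must receive the same outcome, and a single agent can also shift her median interval continuously; I would check that the reduction to Moulin's single-location model is consistent with a fixed $\beta$, i.e.\ that strategyproofness across these moves forces one global tie-breaking convention rather than an $i$- or $D_{-i}$-dependent one --- this follows from the single-plateau preference structure (Proposition~1) combined with the fact that manipulation lets an agent realize any median interval of any width $\le$ something determined by $m$.

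The main obstacle I anticipate is precisely the anonymity-swap step: in the full three-action setting, an agent can always mimic another agent because hiding and replication let her match any cardinality, but with manipulation only this requires the equal-cardinality hypothesis, and I must be careful that throughout Lemma~\ref{lem:even}'s induction the ``replacement'' location sets $D_{\hat j}'$, $D_i'$, etc., that get substituted are all of size $m$ --- in particular the configuration $D_i^1 = \{s_i, y_i^0, y_i^1, y_i^1\}$ used as a witness has size $4$, so the generic $m$-sized analogues must be exhibited, and when $m$ is small (say $m=2$ or $m=3$) one must verify by hand that enough witness configurations of that size exist to drive every contradiction in Steps~1 and~2. If for some small odd $m$ the witnesses are too constrained, the theorem might need a side argument or a mild restriction, but I expect that for all $m\ge 2$ the constructions go through with at most notational changes.
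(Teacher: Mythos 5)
Your ``if'' direction and the overall architecture of the ``only if'' direction match the paper's intent, and you correctly isolate the anonymity-swap (agent $i$ reporting agent $j$'s set) as the step that the equal-cardinality hypothesis legitimizes. However, your central claim --- that Lemmas \ref{lem:most_two_output}, \ref{lem:even} and \ref{lem:sp_eff_only_peak} can be invoked ``essentially verbatim'' after checking that all witness multisets can be realized with exactly $m$ elements --- is precisely what the paper says cannot be done: it states that ``we cannot directly apply Lemma \ref{lem:even}.'' The actual proof in Appendix~\ref{app:sp_mani_only} splits on the parity of $m$. The even case (Lemmas \ref{app_lem:even} and \ref{app_lem:sp_eff_only_peak_even}) is indeed a near-verbatim port of the Section~\ref{sec:sp} argument, so your plan works there. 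The odd case (Lemma \ref{app_lem:sp_eff_only_peak_odd}) is not a port: when $m$ is odd every $\median(D_i)$ is a single point, the ``straddling'' configurations $y_j^0\leq s_i<h_i\leq y_j^1$ that organize Lemma \ref{lem:even} cannot occur, the relevant index set becomes $\mathcal{I}=\{j\mid y_j^*\leq s_i\}$ (nonempty by efficiency, which is what ultimately yields the contradiction), and the induction step requires a genuinely new ingredient: a third witness $D_i^3$ with $l(s_i,D_i^3)=l(h_i,D_i^3)$, used to pin down that the left output satisfies $s_i'=s_i$ after agent $\hat{j}$'s report is replaced. None of this appears in your plan.

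Your hedge that ``for some small odd $m$ the witnesses might be too constrained'' points at the right case but misdiagnoses it: the obstruction is structural for all odd $m$, not a small-$m$ cardinality issue. For instance, the natural size-$m$ analogue of the witness $\{s_i,y_i^0,y_i^1,y_i^1\}$ need not exist for odd $m$: a size-$3$ multiset with median $y_i^*$ satisfies $l(s_i,D_i)<l(h_i,D_i)$ only when $4y_i^*<3h_i+s_i$, so when $y_i^*$ is close to $h_i$ no odd-size report with that median prefers $s_i$, and the two-output dichotomy must be handled through Eqn.~(\ref{app_eqn:charac_fl_odd}) and the indifference set $D_i^3$ rather than through explicitly constructed interval-median witnesses. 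To complete your proof you would need to add the odd-$m$ branch as a separate lemma along the lines of Lemma \ref{app_lem:sp_eff_only_peak_odd}, rather than treating it as a notational variant of Lemma \ref{lem:even}.
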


The proof of Theorem \ref{thm:simply_mani_charac} follows in a similar spirit as that of Theorem \ref{thm:sp_charac}, which can be found in Appendix \ref{app:sp_mani_only}.
Lemma \ref{lem:most_two_output} still holds but the proof of Lemma \ref{lem:sp_eff_only_peak} needs modification because we cannot directly apply Lemma \ref{lem:even}.  
% Due to the space limit, we omit the detailed proof, which can be found in our full paper.

\section{Group Strategyproof Facility Location Mechanisms for Identifying Locations}\label{sec:gsp}

If a mechanism is group strategyproof w.r.t. some set of strategic behavior, then it must be strategyproof w.r.t. the set of strategic behaviors.
This means facility location mechanisms which are anonymous, efficient and group strategyproof should satisfy Eqn. (\ref{eqn:sp}).
We can further show that these strategyproof mechanisms are indeed group strategyproof.

\begin{theorem}\label{thm:gsp_charac}
For facility location problems where each agent controls multiple locations, a mechanism $\pi$ is anonymous, efficient and group strategyproof w.r.t. manipulation, replication, and hiding, if and only if there exist $\alpha_1,\dots,\alpha_{n-1}\in\mathbb{R}\cup\{-\infty,+\infty\}$, $\beta\in[0,1]$, for $\forall D_1, \dots, D_n$,
$$\pi(D_1,\dots,D_n)=\median(y_1^*,\dots,y_n^*,\alpha_1,\dots,\alpha_{n-1})$$
where $y_i^*=\beta y_i^0+(1-\beta)y_i^1$ with $[y_i^0,y_i^1]=\median(D_i)$, for $i=1,\dots,n$.
\end{theorem}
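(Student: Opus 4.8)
The plan is to prove the two directions separately, and essentially all the work lies in one of them. For the ``only if'' direction there is nothing new: group strategyproofness w.r.t.\ manipulation, replication, and hiding implies strategyproofness w.r.t.\ the same set of behaviors, so by Theorem~\ref{thm:sp_charac} any anonymous, efficient, group strategyproof mechanism already has the form~(\ref{eqn:sp}). So the substance is the ``if'' direction: every mechanism of the form~(\ref{eqn:sp}) is in fact group strategyproof.

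The first step for the ``if'' direction is to observe that the richer action space buys a coalition nothing beyond what it could do in a single-location world. No matter what set $D_i'$ an agent submits (using any combination of the three behaviors), the output depends on $D_i'$ only through the scalar $y_i^{*}=\beta y_i^{0}+(1-\beta)y_i^{1}$, where $[y_i^{0},y_i^{1}]=\median(D_i')$; and conversely, by reporting a single point an agent can realize any value of $y_i^{*}\in\mathbb{R}$. Hence a deviation by a coalition $C$ is exactly a replacement of $(y_i^{*})_{i\in C}$ by arbitrary reals $(z_i')_{i\in C}$, with $z_j'=y_j^{*}$ for $j\notin C$, moving the output from $o=\median(y_1^{*},\dots,y_n^{*},\alpha_1,\dots,\alpha_{n-1})$ to $o'=\median(z_1',\dots,z_n',\alpha_1,\dots,\alpha_{n-1})$.

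Next I would run the standard generalized-median group-strategyproofness argument against this reduced picture. Assume for contradiction that every $i\in C$ is strictly better off at $o'$; without loss of generality $o'>o$ (the case $o'<o$ is symmetric; $o'=o$ is immediate). By the single-plateau structure established in the Proposition of Section~2, a strict improvement for $i\in C$ at a point $o'>o$ forces $y_i^{1}>o$, where $[y_i^{0},y_i^{1}]=\median(\bar D_i)$ is $i$'s true plateau (otherwise the loss would be nondecreasing from $o$ to $o'$). Separately, since $o$ is the $n$-th smallest of the $2n-1$ numbers $\{y_j^{*}\}\cup\{\alpha_k\}$ and $o'>o$ is the $n$-th smallest of $\{z_j'\}\cup\{\alpha_k\}$, while the phantoms and the coordinates outside $C$ are unchanged, a short counting argument on the number of values $\le o$ yields some $i\in C$ with $y_i^{*}\le o$. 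For that agent, $y_i^{0}\le y_i^{*}\le o<y_i^{1}$, so $o\in[y_i^{0},y_i^{1}]$ and $l(o,\bar D_i)$ is minimal, whence $l(o',\bar D_i)\ge l(o,\bar D_i)$ --- contradicting that $i$ is strictly better off.

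The one genuinely new point, and the step I expect to require the most care, is the plateau tie-breaking: unlike in the single-peaked case, the pivotal coalition member's representative $y_i^{*}$ need not be an endpoint of its plateau, so one has to verify that ``$y_i^{*}\le o$ and $y_i^{1}>o$'' still forces $o$ into $[y_i^{0},y_i^{1}]$. This is exactly why I would record the elementary bound $y_i^{0}\le y_i^{*}$ (true because $y_i^{*}$ is a convex combination of $y_i^{0}$ and $y_i^{1}$). Everything else is routine order-statistic bookkeeping of the kind used for Moulin's generalized median voter schemes, which I would prove once and reuse verbatim for the symmetric case $o'<o$.
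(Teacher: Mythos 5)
Your proposal is correct and follows essentially the same route as the paper: the ``only if'' direction is delegated to Theorem~\ref{thm:sp_charac}, and the ``if'' direction is the standard pivotal argument on the generalized median combined with single-plateau monotonicity --- the paper runs it mirror-imaged (it fixes a beneficiary $i$ with $y_i^1<\pi(D_{\mathcal{S}},D_{-\mathcal{S}})$, notes the output must move left, and locates a coalition member $j$ whose representative was weakly to the right of the old output who therefore cannot strictly gain), whereas you assume all of $C$ gains with $o'>o$ and locate a member with $y_i^*\le o$. Your explicit treatment of the reduction to representatives $y_i^*$ and of the tie-breaking bound $y_i^0\le y_i^*\le y_i^1$ is a minor presentational refinement, not a different argument.
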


\begin{proof}
Let $D_i$ be any reported locations controlled by agent $i$ with $[y_i^0,y_i^1]=\median(D_i)$ for $i\in [n]$.
For any agent $i$ and any coalition $S$ of agents including $i$, denote their real locations by $D_{\mathcal{S}}=\{D_{j}\}_{j\in\mathcal{S}}$, and the reported locations of other agents by $D_{-\mathcal{S}}$.
W.l.o.g. assume $y_i^1 < \pi(D_{\mathcal{S}},D_{-\mathcal{S}})$.

Consider any misreport by the coalition $D_{\mathcal{S}}'=\{D_{j}'\}_{j\in\mathcal{S}}$ that satisfies $$l(\pi(D_{\mathcal{S}}',D_{-\mathcal{S}}),D_i) < l(\pi(D_{\mathcal{S}},D_{-\mathcal{S}}),D_i).$$
Then $\pi(D_{\mathcal{S}}',D_{-\mathcal{S}}) < \pi(D_{\mathcal{S}},D_{-\mathcal{S}})$.
This means there exists some $j\in\mathcal{S}$ such that $y_j^1\geq\pi(D_{\mathcal{S}},D_{-\mathcal{S}})$, while after misreporting $D_j'\neq D_j$, some value $y_j'\in\median(D_j')$ satisfies $y_j'<\pi(D_{\mathcal{S}},D_{-\mathcal{S}})$.
Then for agent $j$, if $y_j^0\leq \pi(D_{\mathcal{S}},D_{-\mathcal{S}})$, her loss cannot be smaller after such a misreport since her loss is minimized originally.
If $y_j^0 > \pi(D_{\mathcal{S}},D_{-\mathcal{S}})$, then after such a misreport she obtains a bigger loss.
Thus, at least agent $j$ in the coalition is not strictly better off, which completes the proof for group strategyproofness.

\end{proof}

However, when some agent have multiple optimal locations, i.e. some agent $i$'s real locations $D_i$ satisfies $\median(D_i)=[y_i^0,y_i^1]$ and $y_i^0<y_i^1$, then most strategyproof mechanism satisfying Eqn. (\ref{eqn:sp}) is not strong group strategyproof.
Here is a counter example.

\begin{example}
Let $n=3$. For any anonymous, efficient and strategyproof mechanism $\pi(D_1,D_2,D_3)$, let $\alpha_1\leq \alpha_2$ be the corresponding constants.

If $\alpha_1<\alpha_2$, let $\alpha_1 < y_1^0 < y_1^1 < \alpha_2$, $D_1=\{y_i^0,y_i^1\}$, $D_2=\{y_i^0\}$, and $D_3=\{y_i^1\}$ as the real controlled locations of agent $1,2,3$ respectively.
Then $\pi(D_1,D_2,D_3)=y_1^*$, the value choosing from the interval $[y_i^0,y_i^1]$.
If $y_1^*=y_1^0$, then agent $1$ can misreport $D_1'=\{y_1^1,y_1^1\}$, resulting in $\pi(D_1,D_2,D_3)=y_1^1$ which is better for agent $3$ and not worse for agent $1$ herself.
Otherwise, $y_1^0 < y_1^* \leq y_1^1$, then agent $1$ can misreport $D_1'=\{y_1^0,y_1^0\}$, resulting in $\pi(D_1,D_2,D_3)=y_1^0$ which is better for agent $2$ and not worse for agent $1$ herself.

If $\alpha_1=\alpha_2\notin \{-\infty,+\infty\}$, let $y_i^0 < \alpha_1=\alpha_2 < y_i^1$ and $D_1=\{y_i^0,y_i^1\}$.
Similarly if $y_1^*=y_1^0$, when $D_2=\{y_i^1\}$, and $D_3=\{y_i^1\}$, agent $1$ truthfully reporting $D_1$ results in $\pi(D_1,D_2,D_3)=\alpha_1$.
But agent $1$ can misreport $D_1'=\{y_1^1,y_1^1\}$, resulting in $\pi(D_1,D_2,D_3)=y_1^1$ which is better for agent $2$ and $3$, and not worse for agent $1$.
If $y_1^*$ is chosen as any value satisfies $y_1^0< y_1^* \leq y_1^1$, when $D_2=\{y_i^0\}$, and $D_3=\{y_i^0\}$, agent $1$ truthfully reporting $D_1$ results in $\pi(D_1,D_2,D_3)=y_1^*$.
But agent $1$ can misreport $D_1'=\{y_1^0,y_1^0\}$, resulting in $\pi(D_1,D_2,D_3)=y_1^0$ which is better for agent $2$ and $3$, and not worse for agent $1$.
\end{example}

If each agent is further assumed to have unique optimal location, then strategyproof mechanisms are also strong group strategyproof, the proof of which is similar as Theorem \ref{thm:gsp_charac}.

\begin{theorem}\label{thm:sgsp_charac}
For facility location problems where each agent controls multiple locations and has an unique optimal location, a mechanism $\pi$ is anonymous, efficient and strong group strategyproof w.r.t. manipulation, replication, and hiding, if there exist $\alpha_1,\dots,\alpha_{n-1}\in\mathbb{R}\cup\{-\infty,+\infty\}$, for $\forall D_1, \dots, D_n$,
$$\pi(D_1,\dots,D_n)=\median(y_1^*,\dots,y_n^*,\alpha_1,\dots,\alpha_{n-1})$$
where $y_i^*=\median(D_i)$,$i=1,\dots,n$.
\end{theorem}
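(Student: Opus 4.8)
The plan is to verify the three asserted properties in increasing order of difficulty, following the same strategy as the proof of Theorem~\ref{thm:gsp_charac}. Anonymity is immediate: under the unique–optimal–location assumption each reported set induces a well-defined point $y_i^\ast=\median(D_i)$, and $\median(y_1^\ast,\dots,y_n^\ast,\alpha_1,\dots,\alpha_{n-1})$ is a symmetric function of the $y_i^\ast$'s. For efficiency I would first record the standard fact that the generalized median of the $n$ reals $y_1^\ast,\dots,y_n^\ast$ together with the $n-1$ constants $\alpha_k$ (i.e.\ the $n$-th smallest of these $2n-1$ values) always lies in $[\min_i y_i^\ast,\max_i y_i^\ast]$. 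By the Proposition, with a degenerate plateau each loss $l(\cdot,\bar D_i)$ is \emph{strictly} monotone on either side of $y_i^\ast$, so any location in that interval is Pareto optimal: sliding the output left strictly hurts an agent whose peak equals $\max_i y_i^\ast$ and helps no one past $\min_i y_i^\ast$, and symmetrically to the right.

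The substantive part is strong group strategyproofness, which I would obtain by a counting argument on order statistics. Fix truthful profiles with peaks $y_i^\ast$, a coalition $\mathcal{S}$, outside reports $D_{-\mathcal{S}}$, and a joint deviation $D_{\mathcal{S}}'$; write $p=\pi(D_{\mathcal{S}},D_{-\mathcal{S}})$, $p'=\pi(D_{\mathcal{S}}',D_{-\mathcal{S}})$, and let $\hat y_j$ be the representative peak induced by $D_j'$ for $j\in\mathcal{S}$. Suppose some member of $\mathcal{S}$ is strictly better off; since each preference is single-peaked at a point, this forces $p'\neq p$, say $p'<p$ (the case $p'>p$ is symmetric). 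Now $p$ is the $n$-th smallest of the $2n-1$ truthful values, so at most $n-1$ of them are $<p$; after the deviation $p'<p$ is the $n$-th smallest of the new multiset, so at least $n$ of the new values are $\le p'$, hence $<p$. The $n-1$ phantoms and the outside peaks are unchanged, so the number of coalition values lying below $p$ must have strictly increased: $\#\{j\in\mathcal{S}:\hat y_j<p\}>\#\{j\in\mathcal{S}:y_j^\ast<p\}$. Hence there is $j\in\mathcal{S}$ with $y_j^\ast\ge p$ but $\hat y_j<p$. For that agent the truthful loss is $y_j^\ast-p\ge 0$ while the loss under the deviation is $y_j^\ast-p'>y_j^\ast-p$ (using $p'<p\le y_j^\ast$), so agent $j$ is \emph{strictly} worse off. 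Thus no coalition can make some member strictly better off without making some member strictly worse off; strategyproofness and group strategyproofness then follow a fortiori.

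The main obstacle is making the order-statistic bookkeeping rigorous in the presence of ties — both ties among the $y_j^\ast$ and $\alpha_k$ at the value $p$, and the subtlety that a richer-action deviation $D_j'$ (via manipulation, replication, or hiding) may have an interval median, so a representative $\hat y_j$ must be fixed consistently with how $\pi$ evaluates $p'$. I would dispose of this by phrasing all counts with the thresholds ``$\le p'$'' versus ``$<p$'', which is exactly the gap exploited above: the strict inequality $p'<p$ separates the two thresholds, so the argument is insensitive to how ties exactly at $p$ or at $p'$ are broken. A secondary, purely routine point is the reduction from an arbitrary deviation to its induced peak: $\pi$ sees $D_j'$ only through $\median(D_j')$, so replication and hiding add no power beyond manipulation here, and the argument already quantifies over every reported peak $\hat y_j$.
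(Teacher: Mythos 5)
Your proof is correct and follows essentially the same route as the paper, which proves Theorem~\ref{thm:sgsp_charac} by the argument of Theorem~\ref{thm:gsp_charac}: if the output moves (say leftward) under a coalition deviation, some coalition member's representative peak must cross the original output from above to below, and under the unique-optimal-location assumption that member is strictly worse off. Your explicit order-statistic count of values below $p$ before and after the deviation rigorously justifies the existence of such a member, a step the paper's proof merely asserts.
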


\section{Future Directions}
We considered richer strategic behavior of agents in facility location problems where each agent controls multiple locations. Facility location problems can be viewed as a single-dimension special case of the strategic linear regression problem, initially introduced by \cite{dekel2010incentive}. Prior work by \cite{dekel2010incentive} and \cite{chen2018strategyproof} have studied linear regression that are strategyproof w.r.t. manipulation. 
% It will be interesting to characterize linear regression under richer strategic considerations, especially because in practice commercial data sources often find it beneficial to replicate or hide part of their data.
However, the regression mechanism proposed in \cite{dekel2010incentive} is not strategyproof w.r.t. replication.
And the GRH mechanisms studied by \cite{chen2018strategyproof}, are not strategyproof w.r.t. to hiding.
% \begin{example}
% Let $\mathcal{D} = D_1 = \{(0,0),(0,1),(0,1),(1,1),(1,2)\}$. 
% And the points can be partitioned into two group, $\{(0,0),(0,1),(0,1)\}$ and $\{(1,1),(1,2)\}$.
% Let $k_1=k_2=1$ in the GRH mechanism.
% Truthfully reporting $D_1$ leads to a result of $\bm{\beta} = (1,0)$, and a loss of $3$.
% However, hiding $(0,0)$ and reporting the rest leads to a result of $\bm{\beta} = (1,1)$, and a loss of $2$ (the partition will not change by doing so).
% This means the agent is able to achieve better loss by hiding.
% \end{example}
It will be interesting to characterize strategyproof linear regression under richer strategic considerations, especially because in practice commercial data sources often find it beneficial to replicate or hide part of their data.

On the other hand, Section \ref{sec:mani} showed that the family of anonymous, efficient and strategyproof mechanisms remains the same even if only manipulation is considered, when agents control the same number of locations. This is somewhat surprising because one may expect that the richer the strategic consideration the smaller the set of strategyproof mechanisms. It will be interesting to fully characterize the family of strategyproof mechanisms w.r.t. manipulation only, with the assumption that each agent controls the same number of locations dropped.  

\bibliographystyle{named}
\bibliography{ArXIv_StrategyproofFacilityLocation}

\appendix
\section{Proof of Theorem \ref{thm:simply_mani_charac}}\label{app:sp_mani_only}

Let $m\geq 1$ be the number of locations each agent controls, i.e. $m=|D_i|, \forall i$.
The proof is separated into two parts due to the different properties of the median operation when $m$ is an odd number and when $m$ is an even number.

\begin{lemma}\label{app_lem:most_two_output}
For facility location problems where each agent controls multiple locations, if a mechanism $\pi$ is anonymous and strategyproof w.r.t. manipulation, then for each $i\in [n]$, $D_{-i}$, 
\begin{itemize}
    \item if $m$ is an odd number, for each $y_i^*\in\mathbb{R}$, there exist $s_i,h_i\in\mathbb{R}$ such that $s_i< y_i^*< h_i$ or $s_i=h_i$ and 
    \begin{equation}\label{app_eqn:charac_fl_odd}
    \pi(D_i,D_{-i})=\left\{
    \begin{aligned}
    &s_i ~~~~~~~~~\mathrm{if}~l(s_i,D_i) < l(h_i,D_i);\\
    &h_i ~~~~~~~~~\mathrm{if}~l(s_i,D_i) > l(h_i,D_i);\\
    &s_i ~\mathrm{or}~ h_i ~~~\mathrm{otherwise} 
    \end{aligned}
    \right.
    \end{equation}
    for all $D_i$ with $y_i^*=\median(D_i)$;
    \item if $m$ is an even number, for each $y_i^0\leq y_i^1\in\mathbb{R}$, there exist $s_i,h_i\in\mathbb{R}$ such that $s_i< y_i^0\leq y_i^1< h_i$ or $s_i=h_i$, and either $\pi(\mathcal{D})\in [y_i^0,y_i^1]$ or 
    \begin{equation}\label{app_eqn:charac_fl_even}
    \pi(D_i,D_{-i})=\left\{
    \begin{aligned}
    &s_i ~~~~~~~~~\mathrm{if}~l(s_i,D_i) < l(h_i,D_i);\\
    &h_i ~~~~~~~~~\mathrm{if}~l(s_i,D_i) > l(h_i,D_i);\\
    &s_i ~\mathrm{or}~ h_i ~~~\mathrm{otherwise} 
    \end{aligned}
    \right.
    \end{equation}
    for all $D_i$ with $[y_i^0,y_i^1]=\median(D_i)$.
\end{itemize}
\end{lemma}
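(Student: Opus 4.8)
The plan is to follow the proof of Lemma~\ref{lem:most_two_output} almost verbatim, noting that every misreport invoked there keeps the deviating agent's median fixed and hence can be carried out by a size-$m$ report via manipulation alone, and then to split the argument according to the parity of $m$ because the shape of $\median(D_i)$ is a point when $m$ is odd and a (possibly degenerate) interval when $m$ is even. Fix $i\in[n]$, $D_{-i}$, and consider the family of all size-$m$ reports with the prescribed median; this family is nonempty (for odd $m$, $m$ copies of $y_i^*$; for even $m$, $m/2$ copies each of $y_i^0$ and $y_i^1$) and any two of its members are reachable from one another by manipulation. For odd $m$, $\median(D_i)=y_i^*$ always. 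First I would show that over the family $\pi$ takes at most two distinct values other than $y_i^*$: if three values $o_1<o_2<o_3$ occurred with $o_2\neq y_i^*$, then by single-peakedness of the loss an agent whose true data realize the extreme value on the same side of $y_i^*$ as $o_2$ strictly benefits from manipulating to the report realizing $o_2$, contradicting strategyproofness. Next I would rule out that $y_i^*$ is realized together with some $v\neq y_i^*$: an agent whose true data realize $v$ would manipulate to the report yielding her optimum $y_i^*$. Hence either $\pi$ is constant $\equiv c$ on the family (set $s_i=h_i=c$, admissible whether or not $c=y_i^*$), or $\pi$ takes exactly two values $s_i<h_i$, both distinct from $y_i^*$; in the latter case excluding the "same-side" configurations (if, say, $s_i<h_i<y_i^*$, an agent realizing $s_i$ manipulates to $h_i$, which is strictly closer to $y_i^*$) forces $s_i<y_i^*<h_i$. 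Finally Eqn.~(\ref{app_eqn:charac_fl_odd}) follows: if $l(s_i,D_i)<l(h_i,D_i)$ but $\pi(D_i,D_{-i})=h_i$, the agent manipulates to a same-median size-$m$ report realizing $s_i$ and strictly reduces her loss; the case $l(s_i,D_i)>l(h_i,D_i)$ is symmetric and the tie is unconstrained.

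For even $m$, $\median(D_i)=[y_i^0,y_i^1]$, and the argument is essentially a transcription of Lemma~\ref{lem:most_two_output}. Working within the family of size-$m$ reports with that median, I would show: (i) $\pi$ has at most two distinct values outside $[y_i^0,y_i^1]$, via the same three-values-on-one-side contradiction; (ii) if some member of the family yields an output inside $[y_i^0,y_i^1]$, then every member does (otherwise an agent realizing an outside value manipulates to the report yielding an optimal output), which is the $\pi(\mathcal{D})\in[y_i^0,y_i^1]$ alternative; (iii) otherwise $\pi$ takes values in $\{s_i,h_i\}$ with both outside the interval, and excluding both-below and both-above configurations forces $s_i<y_i^0\leq y_i^1<h_i$ (or $s_i=h_i$); (iv) Eqn.~(\ref{app_eqn:charac_fl_even}) then follows exactly as in the odd case. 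Note this gives the clean dichotomy "all outputs inside the interval, or all in $\{s_i,h_i\}$", which implies the stated form.

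I do not expect a genuine obstacle in this lemma: the content is bookkeeping, and the only point needing care is to verify at each use of strategyproofness that the misreport in question is a legitimate size-$m$ report with the prescribed median — which holds because for every target median there is ample freedom among size-$m$ reports and all the contradictions compare two reports sharing a common median. The substantive new difficulty of the manipulation-only, equal-$m$ setting appears not here but in the analogue of Lemma~\ref{lem:sp_eff_only_peak}, whose proof relied on Lemma~\ref{lem:even}, where replication and hiding were genuinely used (e.g., the explicit size-$4$ witnesses $D_1^1=\{s_1,y_1^0,y_1^1,y_1^1\}$ and the anonymity swap $\pi(D_j,D_i^1,D_{-i,-j})=\pi(D_i^1,D_j,D_{-i,-j})$), and that step must be redone without them.
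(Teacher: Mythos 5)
Your proposal is correct and follows essentially the same route as the paper's appendix proof: the paper likewise transplants the argument of Lemma \ref{lem:most_two_output} to the manipulation-only setting, splitting on the parity of $m$ (three-outputs contradiction, then exclusion of same-side configurations to force $s_i<y_i^*<h_i$, resp.\ $s_i<y_i^0\leq y_i^1<h_i$), with the even case declared "similar." Your explicit check that every misreport used is a size-$m$ report sharing the deviator's median — so that manipulation alone suffices — is exactly the point that makes the transplant legitimate, and your closing remark correctly locates the genuine new work in the analogues of Lemmas \ref{lem:sp_eff_only_peak} and \ref{lem:even}.
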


\begin{proof}
If a mechanism outputs the optimal location based an agent's reported locations or a constant location, the agent has no incentive to misreport.
Thus, we try to characterize a strategyproof mechanism beyond these two trivial types.

First for the case that $m$ is an odd number, each agent $i$ has an unique optimal location, denoted by $y_i^*$.
We focus on all $D_i$ with $y_i^*=\median(D_i)$, which are possible misreports from the agent who manipulate its private locations.
For any fixed $D_{-i}$, suppose there are at least three different outputs $\pi(\{D_i',D_{-i}\})$ for all such $D_i'$, w.l.o.g. let $\pi(\{D_i^1,D_{-i}\}) < \pi(\{D_i^2,D_{-i}\}) < \pi(\{D_i^3,D_{-i}\})$.
If $y_i^* \leq \pi(\{D_i^2,D_{-i}\})$, then when other agents report $D_{-i}$, agent $i$ whose real controlled locations $\bar{D_i} = D_i^3$ will misreport $D_i^2$ to obtain a smaller loss.
Otherwise, $y_i^* > \pi(\{D_i^2,D_{-i}\})$, then when other agents report $D_{-i}$, agent $i$ whose real controlled locations $\bar{D_i} = D_i^1$ will misreport $D_i^2$ to obtain a smaller loss.
This is a contradiction, which means there are at most two different outputs for all such $D_i$, denoted by $s_i < h_i$. 
Let $\pi(\{D_i^1,D_{-i}\}) =s_i$ and $\pi(\{D_i^2,D_{-i}\})=h_i$.
(Here we ignore the case $s_i=h_i$ corresponding to a constant output.)
If $y_i^* \leq s_i$, when other agents report $D_{-i}$, agent $i$ whose real controlled locations $\bar{D_i} = D_i^2$ will misreport $D_i^1$.
Similarly, if $y_i^* \geq h_i$, when other agents report $D_{-i}$, agent $i$ whose real controlled locations $\bar{D_i} = D_i^1$ will misreport $D_i^2$.
Thus, we know $s_i<y_i^*<h_i$, and Eqn. (\ref{app_eqn:charac_fl_odd}) is straightforward.
Finally, the case that $m$ is an even number can be analyzed similarly.

\end{proof}

% \begin{proposition}
% For facility location problems where each agent control multiple locations, if a mechanism $\pi$ is strategyproof w.r.t. manipulation, replication and hiding, and for some $i\in [n]$, $D_{-i}$, $y_i^*\in\mathbb{R}$, there exist $s_i,h_i\in\mathbb{R}$ such that $s_i< y_i^*< h_i$, and $\pi(D_i,D_{-i})$ satisfies Eqn. (\ref{eqn:charac_fl_odd}) for all $D_i$ satisfying that $|D_i|$ is an odd number and $y_i^*=\median(D_i)$, then for any $y_i'\in (s_i,h_i)$, $\pi(D_i',D_{-i})$ satisfies Eqn. (\ref{eqn:charac_fl_odd}) for all $D_i'$ satisfying that $|D_i'|$ is an odd number and $y_i'=\median(D_i')$.
% \end{proposition}

% Here is an example of a strategyproof mechanism satisfying Eqn. (\ref{app_eqn:charac_fl_odd}) and Eqn. (\ref{app_eqn:charac_fl_even}) for $s_i\neq h_i$. 

% \begin{example}
% For given constant $t_1<t_2\in\mathbb{R}$, let $P=\sum_{i\in[n]}\mathbb{I}\{l(t_1,D_i)\leq l(t_2,D_i)\}$, i.e. the number of agents who prefer $t_1$ than $t_2$, and $Q=\sum_{i\in[n]}\mathbb{I}\{l(t_1,D_i)>l(t_2,D_i)\}$.
% Then the following mechanism is strategyproof but does not belong to the previous family:
% \begin{equation}
%     \pi(D_i,D_{-i})=\left\{
%     \begin{aligned}
%     &t^* ~~~\mathrm{if}~t^* > t_2 ~\mathrm{or}~t^* < t_1;\\
%     &t_1 ~~~\mathrm{if}~t_1 < t^* < t_2 ~\mathrm{and}~P \geq Q;\\
%     &t_2 ~~~\mathrm{if}~t_1 < t^* < t_2 ~\mathrm{and}~P < Q;\\
%     \end{aligned}
%     \right.
% \end{equation}
% where $t^* = \median (\median(D_1),\dots,\median(D_n),\alpha_1,\dots,\alpha_{n+1})$.
% \end{example}

% Notice that such kind of mechanism is not efficient. 
In the next two lemma, corresponding to $m$ is odd and even, we formally prove the case $s_i\neq h_i$ in Eqn. (\ref{app_eqn:charac_fl_odd}) and Eqn. (\ref{app_eqn:charac_fl_even}) will not happen if the mechanism is also efficient. 

\begin{lemma}\label{app_lem:sp_eff_only_peak_odd}
For facility location problems where each agent controls multiple locations, if a mechanism $\pi$ is anonymous, efficient and strategyproof w.r.t. manipulation, and $m$ is an odd number, then for any $i\in [n]$, $D_{-i}$, $y_i^*\in\mathbb{R}$, there do not exist $s_i,h_i\in\mathbb{R}$ such that $s_i< y_i^*< h_i$, and $\pi$ satisfies Eqn. (\ref{app_eqn:charac_fl_odd}) for all $D_i$ with $y_i^*=\median(D_i)$.
\end{lemma}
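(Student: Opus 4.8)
The plan is to argue by contradiction, following the two-stage structure of the proofs of Lemma \ref{lem:even} and Lemma \ref{lem:sp_eff_only_peak}, but carried out entirely with cardinality-preserving manipulations. Suppose, toward a contradiction, that there are $i$, $D_{-i}$, $y_i^*$ and $s_i < y_i^* < h_i$ such that $\pi$ satisfies Eqn.~(\ref{app_eqn:charac_fl_odd}) for every size-$m$ report $D_i$ with $\median(D_i) = y_i^*$. Since $m$ is odd, every agent's median is a single point, which I write as $y_j^* = \median(D_j)$; the alternative ``$\pi(\mathcal{D}) \in [y_i^0, y_i^1]$'' of Lemma \ref{app_lem:most_two_output} thus degenerates, and because $s_i \neq h_i$ both outputs are genuinely attained, i.e.\ there exist size-$m$ reports $D_i^1, D_i^2$ with $\median(D_i^1) = \median(D_i^2) = y_i^*$, $\pi(D_i^1, D_{-i}) = s_i$ and $\pi(D_i^2, D_{-i}) = h_i$.

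The reason the perturbation arguments of Lemma \ref{lem:even} survive in this restricted setting is that every agent controls \emph{exactly} $m$ locations. Consequently: (a) any agent can manipulate her $m$-location report into an exact copy of any other agent's $m$-location report, so the anonymity swaps used repeatedly in the proof of Lemma \ref{lem:even} --- e.g.\ agent $i$ with true locations $D_i^1$ reporting $D_j$ and invoking $\pi(D_j, D_i^1, D_{-i,-j}) = \pi(D_i^1, D_j, D_{-i,-j})$ --- remain legitimate manipulations; and (b) for any point $t$ and any prescribed side, agent $i$ can form a size-$m$ report whose median is $t$ and which is arbitrarily concentrated toward that side, which is all that is used when the original proof ``moves a report inside $[s_i, h_i]$.'' The only bookkeeping changes relative to Lemma \ref{lem:even} are to replace its even-cardinality example sets by size-$m$ ones and to set every $y_j^0 = y_j^1 = y_j^*$.

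With this dictionary in place, I would first establish the odd-$m$ analogue of Lemma \ref{lem:even}: the index set $\mathcal{I}_1 = \{\, j : y_j^* \le s_i \,\}$ is empty. The cases $n = 1, 2$ are immediate from efficiency (for $n=1$ an efficient mechanism returns $y_1^*$). For $n \ge 3$ one re-runs the induction of Lemma \ref{lem:even} on $|\mathcal{I}_1|$ (its Steps 1 and 2) verbatim, using (a) for the anonymity swaps, using (b) to realize the intermediate reports, and reading each output constraint off from strategyproofness and efficiency exactly as there. Once $\mathcal{I}_1 = \emptyset$, the profile $(D_i^1, D_{-i})$ --- whose output is $s_i$ --- is not Pareto optimal: no agent has peak $\le s_i$, so nudging the facility slightly to the right of $s_i$ strictly decreases every agent's loss, contradicting efficiency. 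This completes the proof.

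I expect the main obstacle to be verifying, step by step, that each perturbation in the induction of Lemma \ref{lem:even} can be realized by a cardinality-preserving manipulation and that the accompanying strategyproofness and efficiency deductions still go through when every $\median(D_j)$ is a single point rather than an interval --- in particular one must handle the tie situations $y_j^* = s_i$ (or $y_j^* = h_i$) that interval-valued medians partly absorb, and ensure via the concentration freedom in (b) that each recorded manipulation is strictly, not merely weakly, profitable.
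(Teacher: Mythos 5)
Your proposal takes essentially the same route as the paper's proof: introduce reports $D_i^1,D_i^2$ realizing $s_i$ and $h_i$, note that efficiency forces the set $\mathcal{I}=\{j: y_j^*\le s_i\}$ to be nonempty yet of size at most $n-1$, and rule out each size $1,\dots,n-1$ by the same replace-the-leftmost-agent-with-a-copy-of-$D_i^1$ induction and anonymity swaps, all of which are cardinality-preserving when every agent controls exactly $m$ locations. The only divergence is in how the inductive step is closed: the paper additionally introduces a tie report $D_i^3$ with $l(s_i,D_i^3)=l(h_i,D_i^3)$ to pin down $s_i'=s_i$ so that the index set shrinks by exactly one per round, whereas you let $s_i'$ drift and absorb this with the iterate-until-size-$t$ bookkeeping of Lemma \ref{lem:even}; both devices close the induction, so this is a presentational rather than substantive difference.
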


\begin{proof}
If $n=1$, an efficient mechanism should always output the median of the reported locations.

For $n=2$, w.o.l.g, assume there exists some $D_{2}$ with median denoted by $y_2^*$, $y_1^*$, and $s_1< y_1^*< h_1$, such that $\pi$ satisfies Eqn. (\ref{app_eqn:charac_fl_odd}) for all $D_1$ with $y_1^*=\median(D_1)$.
This means there exist at least two special location sets $D_1^1$ and $D_1^2$ such that $\pi(D_1^1,D_2)=s_1$ and $\pi(D_1^2,D_2)=h_1$ (for example $D_1^1=\{s_1,y_1^*,y_1^*\}$ and $D_1^2=\{y_1^*,y_1^*,h_1\}$).
If $y_2^*\geq y_1^*$, then $\pi(D_1^1,D_2)=s_1$ is not efficient since at least $y_1^*$ is a better output for both agent $1$ and $2$.
Similarly, if $y_2^* < y_1^*$, then $\pi(D_1^2,D_2)=h_1$ is not efficient since at least $y_1^*$ is a better output for both agent $1$ and $2$.
It is a contradiction, so the result holds for $n=2$.

For $n\geq 3$, assume there exist some $i$, $D_{-i}$, $y_i^*$, and $s_i< y_i^*< h_i$, such that $\pi$ satisfies Eqn. (\ref{app_eqn:charac_fl_odd}) for all $D_i$ with $y_i^*=\median(D_i)$.
Consider three special location sets $D_i^1,D_i^2,D_i^3$ satisfying $\pi(D_i^1,D_{-i})=s_i$, $\pi(D_i^2,D_{-i})=h_i$, and $l(s_i,D_i^3)=l(h_i,D_i^3)$.
For example,
$D_i^1=\{s_i,y_i^*,y_i^*\}$, $D_i^2=\{y_i^*,y_i^*,h_i\}$ and 
\begin{equation}
D_i^3=\left\{
\begin{aligned}
    &\{s_i,y_i^*,\frac{1}{2}(3h_i+s_i)-y_i^*\} ~~~\mathrm{if}~y_i^*\geq\frac{1}{2}(h_i+s_i); \nonumber\\ 
    &\{\frac{1}{2}(h_i+3s_i)-y_i^*,y_i^*,h_i\} ~~~\mathrm{if}~y_i^* < \frac{1}{2}(h_i+s_i). \nonumber\\
\end{aligned}
\right.
\end{equation}
W.l.o.g., we assume $\pi(D_i^3,D_{-i})=h_i$.
Define an index set $\mathcal{I}=\{j|y_j^*=\median(D_j), y_j^*\leq s_i\}$.
By definition and the efficiency condition, we know $1\leq |\mathcal{I}| \leq n-1$.
In the following proof, we shall derive contradictions for $|\mathcal{I}|= 1,\dots, n-1$ by induction, thus the result holds for any $n\geq 3$.

If $|\mathcal{I}|= 1$, that is there is only one $j\in [n]$, such that $y_j^*=\median(D_j)$ and $y_j^*\leq s_i$.
Denote $D_{-i,-j}$ as the reported location sets by agents other than agent $i$ and $j$, then we rewrite the fact that $\pi(D_i^1,D_j,D_{-i,-j})=s_i$ and $\pi(D_i^2,D_j,D_{-i,-j})=h_i$.
Consider agent $j$'s another possible report $D_j'=D_i^1$.
Due to the efficiency, we know $\pi(D_i^1,D_j',D_{-i,-j}) > s_i \geq y_j^*$.
If $\pi(D_i^2,D_j',D_{-i,-j})<h_i$, then when agent $i$ reports $D_i^2$ and other agents report $D_{-i,-j}$, agent $j$ with real locations $\bar{D_j}=D_j$ will misreport $D_j'$ to obtain a smaller loss.
If $\pi(D_i^2,D_j',D_{-i,-j})>h_i$, then when agent $i$ reports $D_i^2$ and other agents report $D_{-i,-j}$, agent $j$ with real locations are $\bar{D_j}=D_j'$ will misreport $D_j$ to obtain a smaller loss.
This means $\pi(D_i^2,D_j',D_{-i,-j})=h_i$.
Now for fixed $D_j',D_{-i,-j}$, if there exists $s_i'<y_i^*$ such that for some $D_i'$ with $y_i^*=\median(D_i')$, $\pi(D_i',D_j',D_{-i,-j})=s_i'$, then $\pi(D_i',D_j',D_{-i,-j})$ is not efficient.
This means for all $D_i'$ with $y_i^*=\median(D_i')$, $\pi(D_i',D_j',D_{-i,-j})=h_i$.
Specifically, $\pi(D_i^1,D_j',D_{-i,-j})=h_i$.
However, when agent $j$ reports $D_j'$ and other agents report $D_{-i,-j}$, agent $i$ with real locations are $D_i^1$ will misreport $D_j$, which leads to $\pi(D_j,D_j',D_{-i,-j})=\pi(D_j,D_i^1,D_{-i,-j})=s_i$ by the anonymity, to obtain a smaller loss.
This is a contradiction, meaning it is impossible that $|\mathcal{I}|= 1$.

Now suppose we have proved it is impossible that $|\mathcal{I}|= 1,\dots, t$ for some $t\geq 1$, we consider the case $|\mathcal{I}|= t+1$.
Let $\hat{j}=\mathop{\arg\min_{j\in\mathcal{I}}\{y_j^*|y_j^*=\median(D_j)\}}$, then $\forall j \in [n]$, $y_j^*=\median(D_j)\geq y_{\hat{j}}^*$.
Similarly, denote $D_{-i,-\hat{j}}$ as the reported location sets by agents other than agent $i$ and $\hat{j}$, and we rewrite the fact that $\pi(D_i^1,D_{\hat{j}},D_{-i,-\hat{j}})=s_i$ and $\pi(D_i^2,D_{\hat{j}},D_{-i,-\hat{j}})=h_i$.
Consider agent $\hat{j}$'s another possible report $D_{\hat{j}}'=D_i^1$.
Due to the efficiency, we know $\pi(D_i^1,D_{\hat{j}}',D_{-i,-j})\geq y_{\hat{j}}^*$.
If $\pi(D_i^2,D_{\hat{j}}',D_{-i,-\hat{j}})<h_i$, then when agent $i$ reports $D_i^2$ and other agents report $D_{-i,-\hat{j}}$, agent $\hat{j}$ with real locations $\bar{D_{\hat{j}}}=D_{\hat{j}}$ will misreport $D_{\hat{j}}'$ to obtain a smaller loss.
If $\pi(D_i^2,D_{\hat{j}}',D_{-i,-\hat{j}})>h_i$, then when agent $i$ reports $D_i^2$ and other agents report $D_{-i,-\hat{j}}$, agent $j$ with real locations $\bar{D_{\hat{j}}}=D_{\hat{j}}'$ will misreport $D_{\hat{j}}$ to obtain a smaller loss.
This means $\pi(D_i^2,D_{\hat{j}}',D_{-i,-\hat{j}})=h_i$.
Now for fixed $D_{\hat{j}}',D_{-i,-\hat{j}}$, if for all $D_i'$ with $y_i^*=\median(D_i')$, $\pi(D_i',D_{\hat{j}}',D_{-i,-\hat{j}})=h_i$, specifically we have $\pi(D_i^1,D_{\hat{j}}',D_{-i,-\hat{j}})=h_i$.
However, when agent $\hat{j}$ reports $D_{\hat{j}}'$ and other agents report $D_{-i,-\hat{j}}$, agent $i$ with real locations $\bar{D_i}=D_i^1$ will misreport $D_{\hat{j}}$, which leads to $\pi(D_{\hat{j}},D_{\hat{j}}',D_{-i,-\hat{j}})=\pi(D_{\hat{j}},D_i^1,D_{-i,-\hat{j}})=s_i$ by the anonymity, to obtain a smaller loss.
This is a contradiction, meaning there must exist $s_i'<y_i^*$ such that for some $D_i'$ with $y_i^*=\median(D_i')$, $\pi(D_i',D_{\hat{j}}',D_{-i,-\hat{j}})=s_i'$.
If $s_i' < s_i$, then $y_{\hat{j}}^*\leq s_i' < s_i$, while $\pi(D_i',D_{\hat{j}},D_{-i,-\hat{j}})\geq s_i$ since it equals either $s_i$ or $h_i$.
This means when agent $i$ reports $D_i'$ and other agents report $D_{-i,-\hat{j}}$, agent $\hat{j}$ with real locations $\bar{D_{\hat{j}}}=D_{\hat{j}}$ will misreport $D_{\hat{j}}'$ to obtain smaller loss.
If $s_i' > s_i$, for the special location set $D_i^3$, we have $l(s_i',D_i^3) < l(s_i,D_i^3) = l(h_i,D_i^3)$.
Thus, $\pi(D_i^3,D_{\hat{j}}',D_{-i,-\hat{j}})=s_i'$, while originally $\pi(D_i^3,D_{\hat{j}},D_{-i,-\hat{j}})=h_i$.
\footnote{Note that if we assume $\pi(D_i^3,D_{-i})=s_i$, the proofs for $n\geq 3$ shall be done by an induction on the opposite index set $\mathcal{I}'=\{j|y_j^*=\median(D_j), y_j^*\geq h_i\}$.}
This means when agent $i$ reports $D_i^3$ and other agents report $D_{-i,-\hat{j}}$, agent $j$ with real location $\bar{D_{\hat{j}}}=D_{\hat{j}}$ will misreport $D_{\hat{j}}'$ to obtain smaller loss.
Thus, we must have $s_i'=s_i$.
In other words, when other agents' reports change to $D_{\hat{j}}',D_{-i,-\hat{j}}$, $\pi$ still satisfies Eqn. (\ref{app_eqn:charac_fl_odd}) with the same $s_i, h_i$ as they report  $D_{\hat{j}},D_{-i,-\hat{j}}$, while the corresponding $|\mathcal{I}| = t$ since $\median(D_{\hat{j}}')=y_i^*>s_i$.
However, we have already proved that it is impossible for $|\mathcal{I}| = t$.
By induction, we provide contradictions for $|\mathcal{I}|= 1,\dots, n-1$, and this completes the proof for $n\geq 3$.

\end{proof}

\begin{lemma}\label{app_lem:sp_eff_only_peak_even}
For facility location problems where each agent control multiple locations, if a mechanism $\pi$ is anonymous, efficient and strategyproof w.r.t. manipulation, and $m$ is an even number, then for any $i\in [n]$, $D_{-i}$, $y_i^0\leq y_i^1\in\mathbb{R}$, there do not exist $s_i,h_i\in\mathbb{R}$ such that $s_i< y_i^0\leq y_i^1< h_i$, and $\pi$ satisfies Eqn. (\ref{app_eqn:charac_fl_even}) for all $D_i$ with $[y_i^0,y_i^1]=\median(D_i)$.
\end{lemma}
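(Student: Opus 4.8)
The plan is to adapt, with minimal changes, the two-step architecture behind Theorem~\ref{thm:sp_charac} --- namely Lemma~\ref{lem:even} followed by Lemma~\ref{lem:sp_eff_only_peak} --- since that argument already deals with agents whose optimal set is an interval $[y_i^0,y_i^1]$. The one new ingredient is that, with only manipulation allowed, every report invoked along the way must be a multiset of exactly $m$ locations; this is precisely why Lemma~\ref{lem:even} cannot be quoted directly. I would assume for contradiction that, against some fixed $D_{-i}$, $\pi$ satisfies Eqn.~(\ref{app_eqn:charac_fl_even}) with $s_i<y_i^0\le y_i^1<h_i$ for all size-$m$ reports $D_i$ of median $[y_i^0,y_i^1]$. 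By Lemma~\ref{app_lem:most_two_output} this puts us in the regime where the output is never in $[y_i^0,y_i^1]$ and both $s_i$ and $h_i$ are realized: since $m$ is even, a median interval is pinned down only by the two middle order statistics, so one can exhibit size-$m$ multisets of median $[y_i^0,y_i^1]$ with $l(s_i,\cdot)<l(h_i,\cdot)$ and others with $l(s_i,\cdot)>l(h_i,\cdot)$ by pushing the $m/2$ smallest or the $m/2$ largest coordinates far out.

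For $n=1$, efficiency forces $\pi(D_1)\in\median(D_1)$, so the bad form cannot occur. For $n=2$, take size-$m$ reports $D_1^1,D_1^2$ of median $[y_1^0,y_1^1]$ with $\pi(D_1^1,D_2)=s_1$ and $\pi(D_1^2,D_2)=h_1$ (the same $D_2$ throughout); since $s_1<y_1^0$, Pareto optimality of $s_1$ forces $y_2^1\le s_1$, and since $h_1>y_1^1$, Pareto optimality of $h_1$ forces $h_1\le y_2^0$, contradicting $y_2^0\le y_2^1$.

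For $n\ge 3$ I would run the induction of Lemma~\ref{lem:even}. Put $\mathcal{I}_1=\{j:[y_j^0,y_j^1]=\median(D_j),\,y_j^0\le s_i,\,y_j^1<h_i\}$ and symmetrically $\mathcal{I}_2$, and induct on $|\mathcal{I}_1|$ (the case of $\mathcal{I}_2$ being the mirror image). Fixing size-$m$ witnesses $D_i^1,D_i^2$ with outputs $s_i,h_i$, pick $\hat{j}\in\mathcal{I}_1$ with the leftmost median and replace $D_{\hat j}$ by a size-$m$ multiset of median $[y_i^0,y_i^1]\subset(s_i,h_i)$; strategyproofness together with efficiency pins the upper output at $h_i$ after the move, and a further strategyproofness argument --- otherwise agent $i$, holding a $D_i^1$ chosen with $l(s_i,D_i^1)<l(h_i,D_i^1)$, impersonates $\hat j$ by anonymity and pulls the facility to $s_i$ --- shows some lower output $s_i'<y_i^0$ survives, so by Lemma~\ref{app_lem:most_two_output} the two-output structure persists with upper value $h_i$ and some new lower value $s_i'$. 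Since $\hat j$'s new median equals $[y_i^0,y_i^1]$ with $y_i^0>s_i'$, $\hat j$ has left the new index set; iterating this replacement, each time on the agent with leftmost median in the current index set, eventually produces a configuration whose index set has size in $\{1,\dots,|\mathcal{I}_1|-1\}$ (or is empty, in which case the efficiency argument below applies at once), contradicting the inductive hypothesis --- the base case $|\mathcal{I}_1|=1$ being handled as in Step~1 of the proof of Lemma~\ref{lem:even}. With $\mathcal{I}_1=\mathcal{I}_2=\emptyset$ established, every other agent $j$ satisfies $y_j^0\le s_i<h_i\le y_j^1$ or $s_i<y_j^0\le y_j^1<h_i$; then, exactly as in the proof of Lemma~\ref{lem:sp_eff_only_peak}, with $\mathcal{I}=\{j:s_i<y_j^0\le y_j^1<h_i\}\ni i$ and $s^*=\min_{j\in\mathcal{I}}y_j^0>s_i$, any output equal to $s_i$ is strictly dominated by $s^*$ (better for all $j\in\mathcal{I}$, no worse for the straddlers, whose plateaus contain $[s_i,s^*]$), contradicting efficiency.

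I expect the $n\ge 3$ induction to be the main obstacle: one must check that the cardinality constraint never bites, i.e.\ that each ``move $\hat j$'s median inside $(s_i,h_i)$'' step and each ``agent $i$ impersonates agent $\hat j$'' step is a legal manipulation (true, since all agents control exactly $m$ locations), and that every witness multiset --- $D_i^1$, $D_i^2$, and, should one prefer the bookkeeping of Lemma~\ref{app_lem:sp_eff_only_peak_odd} over that of Lemma~\ref{lem:even}, a ``balanced'' $D_i^3$ with $l(s_i,D_i^3)=l(h_i,D_i^3)$ used to force $s_i'=s_i$ --- exists with exactly $m$ (even) coordinates; this is where having $m$ even, so that the median is determined by the two middle order statistics alone, is used. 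A secondary point to verify is that $D_i^1$ can always be chosen with \emph{strict} preference $l(s_i,D_i^1)<l(h_i,D_i^1)$, which is what turns the impersonation step into a strictly profitable deviation.
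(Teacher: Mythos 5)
Your proposal follows essentially the same route as the paper: the paper proves this lemma by first establishing an auxiliary claim (its Lemma~\ref{app_lem:even}, the manipulation-only analogue of Lemma~\ref{lem:even}) that the index sets $\mathcal{I}_1$ and $\mathcal{I}_2$ are empty, via exactly the induction on $|\mathcal{I}_1|$ you describe, and then derives the contradiction with efficiency through the same $s^*=\min_{j\in\mathcal{I}}y_j^0$ argument you give at the end. Your observation that every witness multiset and impersonation move can be realized with exactly $m$ (even) locations is the correct reason the argument survives the restriction to manipulation only, and matches the paper's treatment.
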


The proof of Lemma \ref{app_lem:sp_eff_only_peak_even} makes use of Lemma \ref{app_lem:even}.

\begin{lemma}\label{app_lem:even}
For facility location problems where each agent controls multiple locations, suppose a mechanism $\pi$ is anonymous, efficient and strategyproof w.r.t. manipulation, and $m$ is an even number.
If there exist some $i\in [n]$, $D_{-i}$, $y_i^0\leq y_i^1\in\mathbb{R}$, $s_i,h_i\in\mathbb{R}$ such that $s_i< y_i^0\leq y_i^1< h_i$, and $\pi$ satisfies Eqn. (\ref{app_eqn:charac_fl_even}) for all $D_i$ with $[y_i^0,y_i^1]=\median(D_i)$, denote two index sets $\mathcal{I}_1=\{j|[y_j^0,y_j^1]=\median(D_j), y_j^0\leq s_i,y_j^1<h_i\}$ and $\mathcal{I}_2=\{j|[y_j^0,y_j^1]=\median(D_j), y_j^1\geq h_i,y_j^0>s_i\}$, then $\mathcal{I}_1=\emptyset,\mathcal{I}_2=\emptyset$.
\end{lemma}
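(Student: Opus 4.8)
The plan is to follow the proof of Lemma~\ref{lem:even} essentially verbatim; the two statements differ only in that here strategyproofness is required against manipulation alone and every agent always reports exactly $m$ locations, with $m$ even. The single observation that makes the transfer work is that a manipulation preserves the size of an agent's report: in every step of the proof of Lemma~\ref{lem:even} the reports that get substituted --- the distinguished sets $D_i^1,D_i^2$ with $\median(\cdot)=[y_i^0,y_i^1]$ and $\pi(D_i^1,D_{-i})=s_i$, $\pi(D_i^2,D_{-i})=h_i$; the replacement $D_{\hat j}'=D_i^1$ reported by agent $\hat j$; the set $D_{\hat j}$ that agent $i$ copies in the anonymity argument; and the intermediate sets $D_i'$ with $\median(\cdot)=[y_i^0,y_i^1]$ --- all consist of exactly $m$ locations, and since $m$ is even any median interval $[y_i^0,y_i^1]$ is realizable by a size-$m$ report. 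Hence each substitution is a legal manipulation for the agent performing it, so the ``beneficial misreport'' arrows in that proof remain valid. The only external input that changes is that the use of Lemma~\ref{lem:most_two_output} --- which guarantees that, at every intermediate profile, agent $i$'s outputs over reports with a fixed median lie either inside $[y_i^0,y_i^1]$ or in a two-element set $\{s_i,h_i\}$ --- is replaced by the even-$m$ case of Lemma~\ref{app_lem:most_two_output}, which gives exactly this dichotomy.

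Concretely I would reproduce the three blocks. \emph{Base cases $n\le 2$.} For $n=1$, efficiency forces $\pi(D_1)\in\median(D_1)$, so $\mathcal{I}_1=\mathcal{I}_2=\emptyset$. For $n=2$, if agent $2$'s median interval lands in $\mathcal{I}_1$ (so $y_2^0\le s_i$, $y_2^1<h_i$) then the outcome $h_i=\pi(D_i^2,D_2)$ is strictly Pareto-dominated by every point of $[y_i^0,y_i^1]$ --- it is optimal for agent $i$ and, since $s_i<y_i^0$ and $y_i^1<h_i$ place $[y_i^0,y_i^1]$ strictly between agent $2$'s plateau and $h_i$, strictly better for agent $2$ as well; this contradicts efficiency, and the case $\mathcal{I}_2\ne\emptyset$ is symmetric. \emph{Induction on $|\mathcal{I}_1|$, base step $|\mathcal{I}_1|=1$.} Writing $\{j\}=\mathcal{I}_1$ and letting agent $j$ switch to $D_j'=D_i^1$: efficiency gives $\pi(D_i^1,D_j',D_{-i,-j})>s_i\ge y_j^0$; comparing the honest outcomes at $D_j$ and at $D_j'$ through agent $j$'s strategyproofness forces $\pi(D_i^2,D_j',D_{-i,-j})=h_i$; efficiency then propagates $\pi(D_i',D_j',D_{-i,-j})=h_i$ to \emph{all} $D_i'$ with median $[y_i^0,y_i^1]$; but now agent $i$ with true set $D_i^1$ can manipulate to $D_j$ and, by anonymity, turn the outcome from $h_i$ into $s_i$, strictly lowering her loss --- a contradiction. \emph{Inductive step $|\mathcal{I}_1|=t+1$.} Pick $\hat j\in\mathcal{I}_1$ with the leftmost lower median and let it switch to $D_{\hat j}'=D_i^1$; the same efficiency-plus-strategyproofness reasoning (now also invoking efficiency to exclude outcomes below $y_{\hat j}^0$) keeps $h_i$ an attainable output and, via the same anonymity manipulation of agent $i$, forces some $s_i'<y_i^0$ to be attainable too; since $\median(D_{\hat j}')=[y_i^0,y_i^1]$ we have $\hat j\notin\mathcal{I}_1'$, so after at most $n-t-1$ further replacements of this kind we reach a profile at which the same characterization holds with index set of size $t$, contradicting the inductive hypothesis. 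The analysis for $\mathcal{I}_2$ is the mirror image.

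I expect the genuinely delicate point to be, as in the main-text proof, the efficiency steps inside the induction: at each intermediate profile one must exhibit a concrete point (a lower median such as $y_{\hat j}^0$ or $s^*$, or a point of $[y_i^0,y_i^1]$) that weakly Pareto-dominates the ``bad'' output, and at that moment Lemma~\ref{app_lem:even} itself is not yet available to describe where the other agents' plateaus sit; the safeguard is to only ever compare the bad output with a point lying between it and $[y_i^0,y_i^1]$ and $\le$ the relevant medians, so that no agent is pushed further from her plateau, and to absorb any far-reaching plateau that would violate this into the index set and continue. A minor side issue is the degenerate size $m=2$, where the reports with a given median interval form a singleton, forcing $s_i=h_i$ so that the hypothesis of the lemma cannot hold with $s_i<y_i^0\le y_i^1<h_i$; thus we may assume $m\ge 4$, which is exactly what guarantees the sets $D_i^1,D_i^2,D_i'$ used above exist. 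For $m\ge 4$ we may moreover take $D_i^1$ (resp.\ $D_i^2$) with $l(s_i,D_i^1)<l(h_i,D_i^1)$ (resp.\ $l(s_i,D_i^2)>l(h_i,D_i^2)$), so that the manipulations above are \emph{strictly} profitable.
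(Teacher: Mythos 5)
Your proposal is correct and follows essentially the same route as the paper: the paper's proof of this lemma is a near-verbatim transcription of the proof of Lemma \ref{lem:even}, justified exactly by your key observation that every substituted report ($D_i^1$, $D_i^2$, $D_{\hat j}'=D_i^1$, agent $i$ copying $D_{\hat j}$, and the intermediate $D_i'$) has size $m$ and is therefore a legal manipulation, with Lemma \ref{app_lem:most_two_output} (even case) supplying the two-output dichotomy in place of Lemma \ref{lem:most_two_output}. Your explicit handling of the degenerate $m=2$ case and the remark on making the deviations strictly profitable are refinements the paper leaves implicit, but they do not change the argument.
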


\begin{proof}[Proof of Lemma \ref{app_lem:even}]
If $n=1$, an efficient mechanism should always output the median of the reported locations.

For $n=2$, w.o.l.g, assume there exists some $D_{2}$ with median denoted by $[y_2^0,y_2^1]$, $y_1^0,y_1^1$, and $s_1< y_1^0\leq y_1^1< h_1$, such that $\pi$ satisfies Eqn. (\ref{app_eqn:charac_fl_even}) for all $D_1$ with $[y_1^0,y_1^1]=\median(D_1)$.
This means there exist at least two special location sets $D_1^1$ and $D_1^2$ such that $\pi(D_1^1,D_2)=s_1$ and $\pi(D_1^2,D_2)=h_1$ (for example $D_1^1=\{s_1,y_1^0,y_1^1,y_1^1\}$ and $D_1^2=\{y_1^0,y_1^0,y_1^1,h_1\}$).
If $\mathcal{I}_1| \neq \emptyset$, that is $y_2^0\leq s_i$ and $y_2^1 < h_i$, then $\pi(D_1^2,D_2)=h_1$ is not efficient since any location between $y_1^0$ and $y_1^1$ is a better output for both agent $1$ and $2$.
If $\mathcal{I}_2 \neq \emptyset$, that is $y_2^1\geq h_i$ and $y_2^0 > s_i$, then $\pi(D_1^1,D_2)=s_1$ is not efficient since any location between $y_1^0$ and $y_1^1$ is a better output for both agent $1$ and $2$.
Thus, the result holds for $n=2$.

For $n\geq 3$, assume there exist some $i$, $D_{-i}$, $y_i^0\leq y_i^1$, and $s_i< y_i^0\leq y_i^1< h_i$, such that $\pi$ satisfies Eqn. (\ref{app_eqn:charac_fl_even}) for all $D_i$ with $[y_i^0,y_i^1]=\median(D_i)$.
Consider two special location sets $D_i^1,D_i^2$ satisfying $\pi(D_i^1,D_{-i})=s_i$, $\pi(D_i^2,D_{-i})=h_i$.
In the following proof, we shall derive contradictions for $|\mathcal{I}_1|= 1,\dots, n-1$ by induction, and the corresponding analysis for $\mathcal{I}_2$ is similar.

If $|\mathcal{I}_1|= 1$, that is there is only one $j\in [n]$, such that $[y_j^0,y_j^1]=\median(D_j)$, $y_j^0\leq s_i$ and $y_j^1 < h_i$.
Denote $D_{-i,-j}$ as the reported location sets by agents other than agent $i$ and $j$, then we rewrite the fact that $\pi(D_i^1,D_j,D_{-i,-j})=s_i$ and $\pi(D_i^2,D_j,D_{-i,-j})=h_i$.
Consider agent $j$'s another possible report $D_j'=D_i^1$.
Due to the efficiency, we know $\pi(D_i^1,D_j',D_{-i,-j}) > s_i \geq y_j^0$.
If $\pi(D_i^2,D_j',D_{-i,-j})>h_i$, then when agent $i$ reports $D_i^2$ and other agents report $D_{-i,-j}$, agent $j$ with real locations $\bar{D_j}=D_j'$ will misreport $D_j$ to obtain a smaller loss.
If $\pi(D_i^2,D_j',D_{-i,-j})<h_i$, then when agent $i$ reports $D_i^2$ and other agents report $D_{-i,-j}$, agent $j$ with real locations $\bar{D_j}=D_j$ will misreport $D_j'$ to obtain a smaller loss.
This means $\pi(D_i^2,D_j',D_{-i,-j})=h_i$.
% Now for fixed $D_j',D_{-i,-j}$, if there exists $s_i'<y_i^0$ such that for some $D_i'$ with $[y_i^0,y_i^1]=\median(D_i')$, $\pi(D_i',D_j',D_{-i,-j})=s_i'$, then $\pi(D_i',D_j',D_{-i,-j})$ is not efficient since $s^*=\min_{k\in [n]}\{y_k^0|[y_k^0,y_k^1]=\median(D_k), s_i< y_k^0\leq y_k^1 < h_i\}$ is a better output for some agents including agent $i$ and not worse for others.
% This means for all $D_i$ with $[y_i^0,y_i^1]=\median(D_i)$, $\pi(D_i,D_j',D_{-i,-j})=h_i$.
Now for fixed $D_j',D_{-i,-j}$, suppose there exists $s_i'<y_i^0$ such that for some $D_i'$ with $[y_i^0,y_i^1]=\median(D_i')$, $\pi(D_i',D_j',D_{-i,-j})=s_i'$.
Let $s^*=\min_{k\in [n]}\{y_k^0|[y_k^0,y_k^1]=\median(D_k), s_i< y_k^0\leq y_k^1 < h_i\}$, and $k$ denote the corresponding agent with $y_k^0 = s^*$.
Due to the efficiency we must have $s_i' \geq s^*$ , which means there exists $k\neq i$ satisfying $y_k^0\leq s_i', y_k^1 < h_i$.
This means we can repeat previous process w.r.t. $D_k$ and new $[s_i',h_i]$ (for at most $n-2$ times since $k\neq i$ and $k\neq j$) until no such $k$.
We reuse the notation $D'_{-i}$ after this process (either with the repeated part or not) as the reports of agents other than $i$, and we have for all $D_i$ with $[y_i^0,y_i^1]=\median(D_i)$, $\pi(D_i,D_{-i}')=h_i$.
Specifically, $\pi(D_i^1,D_j',D_{-i,-j})=h_i$. (If the repeated process happens, we only need to replace $j$ and $D_i^1$ by the lass $k$ and $D_i'$ in the process.)
However, when agent $j$ reports $D_j'$ and other agents report $D_{-i,-j}$, agent $i$ with real locations $\bar{D_i}=D_i^1$ will misreport $D_j$, which leads to $\pi(D_j,D_j',D_{-i,-j})=\pi(D_j,D_i^1,D_{-i,-j})=\pi(D_i^1,D_j,D_{-i,-j})=s_i$ by its symmetry, to obtain a smaller loss.
This is a contradiction, meaning it is impossible that $|\mathcal{I}|= 1$.

Now suppose we have proved it is impossible that $|\mathcal{I}|= 1,\dots, t$ for some $t\geq 1$, we consider the case $|\mathcal{I}|= t+1$.
Let $\hat{j}=\mathop{\arg\min_{j\in\mathcal{I}_1}\{y_j^0|[y_j^0,y_j^1]=\median(D_j)\}}$.
Similarly, denote $D_{-i,-\hat{j}}$ as the reported location sets by agents other than agent $i$ and $\hat{j}$, and we rewrite the fact that $\pi(D_i^1,D_{\hat{j}},D_{-i,-\hat{j}})=s_i$ and $\pi(D_i^2,D_{\hat{j}},D_{-i,-\hat{j}})=h_i$.
Consider agent $\hat{j}$'s another possible report $D_{\hat{j}}'=D_i^1$.
If $\pi(D_i^1,D_{\hat{j}}',D_{-i,-j}) < y_{\hat{j}}^0$, it is not efficient since at least $y_{\hat{j}}^0$ is a better output for all agents $j\in\mathcal{I}_1$ as well as agent $i$ and not worse for other agents.
If $\pi(D_i^2,D_{\hat{j}}',D_{-i,-\hat{j}})>h_i$, then when agent $i$ reports $D_i^2$ and other agents report $D_{-i,-\hat{j}}$, agent $j$ with real locations $\bar{D_{\hat{j}}}=D_{\hat{j}}'$ will misreport $D_{\hat{j}}$ to obtain a smaller loss.
If $y_{\hat{j}}^0\leq \pi(D_i^2,D_{\hat{j}}',D_{-i,-\hat{j}})<h_i$, then when agent $i$ reports $D_i^2$ and other agents report $D_{-i,-\hat{j}}$, agent $\hat{j}$ with real locations $\bar{D_{\hat{j}}}=D_{\hat{j}}$ will misreport $D_{\hat{j}}'$ to obtain a smaller loss.
This means $\pi(D_i^2,D_{\hat{j}}',D_{-i,-\hat{j}})=h_i$.
Now for fixed $D_{\hat{j}}',D_{-i,-\hat{j}}$, if for all $D_i$ with $[y_i^0,y_i^1]=\median(D_i)$, $\pi(D_i,D_{\hat{j}}',D_{-i,-\hat{j}})=h_i$, specifically we have $\pi(D_i^1,D_{\hat{j}}',D_{-i,-\hat{j}})=h_i$.
Then when agent $\hat{j}$ reports $D_{\hat{j}}'$ and other agents report $D_{-i,-\hat{j}}$, agent $i$ with real locations $\bar{D_i}=D_i^1$ will misreport $D_{\hat{j}}$, which leads to $\pi(D_{\hat{j}},D_{\hat{j}}',D_{-i,-\hat{j}})=\pi(D_{\hat{j}},D_i^1,D_{-i,-\hat{j}})=s_i$ by its symmetry, to obtain a smaller loss.
This is a contradiction, meaning there must exist $s_i'<y_i^0$ such that for some $D_i'$ with $[y_i^0,y_i^1]=\median(D_i')$, $\pi(D_i',D_{\hat{j}}',D_{-i,-\hat{j}})=s_i'$.
% If $s_i' < s_i$, then $y_{\hat{j}}^0\leq s_i' < s_i$ due to the efficiency, while $\pi(D_i',D_{\hat{j}},D_{-i,-\hat{j}})\geq s_i$ since it equals either $s_i$ or $h_i$.
% This means when agent $i$ reports $D_i'$ and other agents report $D_{-i,-\hat{j}}$, agent $\hat{j}$ with real locations $\bar{D_{\hat{j}}}=D_{\hat{j}}$ will misreport $D_{\hat{j}}'$ to obtain smaller loss.
% Thus, we must have $s_i'\geq s_i$.
In other words, when other agents' reports change to $D_{\hat{j}}',D_{-i,-\hat{j}}$, there still exist $s_i',h_i'$, such that $s_i' < y_i^0\leq y_i^1 < h_i'=h_i$ and $\pi$ satisfies Eqn. (\ref{app_eqn:charac_fl_odd}) for all $D_i$ satisfying that $|D_i|$ is an even number and $[y_i^0,y_i^1]=\median(D_i)$. 
And for the corresponding index set $\mathcal{I}_1' = \{j|[y_j^0,y_j^1]=\median(D_j), y_j^0\leq s_i',y_j^1<h_i\}$ (after replacing $D_{\hat{j}}$ by $D_{\hat{j}}'$), we have $\hat{j}\notin \mathcal{I}_1'$ since $\median(D_{\hat{j}}') = [y_i^0, y_i^1]$.
Finally, if $|\mathcal{I}_1'| > t$, we can repeatedly replace the report of the agent whose location set has the leftmost median among those agents in $\mathcal{I}_1'$ in the same way until $|\mathcal{I}_1'| \leq t$, and the same analysis still holds.
Since such an index set satisfies $|\mathcal{I}_1'| \leq n-1$, after at most $n-t-1$ times of such replacing, we must have $|\mathcal{I}_1'| = t$.
However, we have already proved that it is impossible for $|\mathcal{I}_1| = t$.
By induction, we provide contradictions for $|\mathcal{I}|= 1,\dots, n-1$, and this completes the proof for $n\geq 3$.

\end{proof}

\begin{proof}[Proof of Lemma \ref{app_lem:sp_eff_only_peak_even}]
Assume there exist some $i$, $D_{-i}$, $y_i^0\leq y_i^1$, and $s_i< y_i^0\leq y_i^1< h_i$, such that $\pi$ satisfies Eqn. (\ref{app_eqn:charac_fl_even}) for all $D_i$ satisfying that $|D_i|$ is an even number and $[y_i^0,y_i^1]=\median(D_i)$.
According to Lemma \ref{app_lem:even}, we know for any $j\in [n]$, let $[y_j^0,y_j^1]=\median(D_j)$, then either $y_j^0\leq s_i < h_i \leq y_j^1$, or $s_i< y_j^0\leq y_j^1 < h_i$.
Let $\mathcal{I}=\{j|[y_j^0,y_j^1]=\median(D_j), s_i< y_j^0\leq y_j^1 < h_i\}$, $\mathcal{I}\neq \emptyset$ since there exist at least one $j=i\in\mathcal{I}$.
Let $s^*=\min_{j\in\mathcal{I}}\{y_j^0|[y_j^0,y_j^1]=\median(D_j)\}$.
Then for any $D_i$ that $\pi(D_i,D_{-i})=s_i$ is not efficient since $s^*$ is a better output for all agent $j\in\mathcal{I}$ while not worse for other agent.
This is a contradiction to the efficiency condition and completes the proof.
\end{proof}

Lemma \ref{app_lem:sp_eff_only_peak_odd} and \ref{app_lem:sp_eff_only_peak_even} mean any mechanism $\pi$ which is efficient and strategyproof only depends on $y_i^*=\median(D_i)$ for $i\in[n]$.
In other words, agents only need to report their most preferred locations, i.e. $\median(D_i)$ for $i\in[n]$, which completes the proof of Theorem \ref{thm:simply_mani_charac}.

\end{document}